\documentclass[10pt, twocolumn, journal]{IEEEtran}

\usepackage{amsmath,amssymb,amsfonts}
\usepackage{graphicx}
\usepackage{algorithm}
\usepackage{cite}
\usepackage{algpseudocode}
\usepackage{bm}
\usepackage{color}
\usepackage{amsthm}
\usepackage{subcaption}

\allowdisplaybreaks

\newtheorem{Prop}{Proposition}
\newtheorem{Lem}{Lemma}

\newtheorem{Def}{Definition}

\DeclareMathOperator{\Tr}{Tr}

\hyphenation{op-tical net-works semi-conduc-tor}

\begin{document}

\title{Task Offloading and Resource Allocation with Multiple 
CAPs and Selfish Users}

\author{\IEEEauthorblockN{Eric Jiang, 
Meng-Hsi Chen, Ben Liang, and  Min Dong}
\thanks{Eric Jiang and Ben Liang are with the Department of 
Electrical and Computer Engineering, University of Toronto, Toronto, Canada 
(e-mail: eric.jiang@mail.utoronto.ca, liang@ece.utoronto.ca)}
\thanks{Meng-Hsi Chen is with MediaTek, Hsinchu, Taiwan (email: 
menghsichen@gmail.com)}
\thanks{Min Dong is with the Department of Electrical, Computer and Software    
Engineering,  Ontario  Tech University, Oshawa, Canada 
(email: min.dong@ontariotechu.ca)}
\thanks{Part of this manuscript has been presented in 
        \cite{chen2015cloudnet}.}
}

\maketitle

\begin{abstract}
In this work, we consider a multi-user mobile edge computing system with 
multiple computing access points (CAPs).  Each mobile user has multiple 
dependent tasks that must be processed in a round-by-round schedule.  In every 
round, a user may process their individual task locally, or choose to offload 
their task to one of the $M$ CAPs or the remote cloud server, in order to 
possibly reduce their processing cost. 
We aim to jointly optimize the offloading decisions of the users and the  
resource allocation decisions for each CAP over a global 
objective function, defined as a weighted sum of total energy consumption and 
the round time. We first present a centralized heuristic solution, termed MCAP, 
where the original problem is relaxed to a semi-definite program (SDP) to 
probabilistically generate the offloading decision. Then, recognizing that the 
users often exhibit selfish behavior to reduce their individual 
cost, we propose a game-theoretical approach, termed MCAP-NE, which allows us 
to compute a Nash Equilibrium (NE) through a finite improvement method starting 
from the previous SDP solution. This approach leads to a solution from which 
the users have no incentive to deviate, with substantially reduced NE 
computation time.
In simulation, we compare the system cost of the 
NE solution with those of MCAP, MCAP-NE, a random mapping, and the optimal 
solution, showing that our NE solution attains near optimal performance under a 
wide 
set of parameter settings, as well as demonstrating the advantages of using 
MCAP to produce the initial point for MCAP-NE.

\end{abstract}

\section{Introduction}

With the ever-increasing demand for computational resources by mobile 
applications, from a variety of tasks including machine learning, virtual 
reality, and natural language processing, there is a growing need for 
accessible computational resources in mobile networks that are available for 
task offloading \cite{kumar2013}.  Mobile Cloud Computing (MCC) would allow 
mobile devices to offload tasks to large remote data centres in the cloud, 
such as Amazon EC2 \cite{amazonec2}.  This promises to improve the performance 
of demanding mobile applications and expand the computational capabilities of 
mobile devices for future applications. 

However, MCC systems may cause high latency for these mobile users, due to the 
transmission delays required to wirelessly offloading tasks through long 
distances in order to reach the remote server.  This can prove fatal in 
providing adequate quality of service of many applications, especially 
computationally-intensive real-time ones such as virtual 
reality. One means of resolving this is by providing 
additional wireless and/or computational resources at the edge of the mobile 
system, such as the wireless base station, thus giving the users faster access 
to the necessary resources \cite{MECSurvey2}, \cite{5GBook}, 
\cite{etsi2016framework}. Because base stations have high-speed 
connections to the Internet and thus the various cloud servers, offloading 
through these edge hosts could substantially reduce the transmission time and 
energy required to access these servers.  Additionally, base stations may 
hold their own servers, providing mobile users with an additional site for 
possible task offloading. When 
computational resources are installed at or near radio access networks, the 
system paradigm is known as Mobile 
Edge Computing (MEC), as defined by the European Telecommunications Standards 
Institute (ETSI) \cite{etsi2016framework}.  However, other similar systems, 
such as micro cloud centers, \cite{greenberg2008}, cloudlets 
\cite{cloudlets2009}, and fog computing \cite{fog2012}, use similar methods to 
reduce the communication latency that result from MCC.   

When these computing resources are built into a wireless access point or 
cellular base station, we refer to it as a \emph{computing access point} 
(CAP) \cite{chen2016icassp}, \cite{chen2018}, \cite{chen2017}, 
\cite{chen2018wireless}.  The CAP performs two 
functions---it 
serves as a link between the individual mobile devices and the 
remote cloud server, whereby the application can be forwarded through a 
high-speed 
connection, and it serves as an additional site for the computation of the task 
itself, if doing so is more beneficial than local or remote processing.  

One research problem that arises from MCC and MEC systems is that of the 
offloading 
decision---how to determine whether a task should be executed on the user's 
mobile device, or offloaded to the available edge or cloud servers 
\cite{5GBook}.  Additionally, the available bandwidth and 
computational resources are limited, especially at the CAP---thus good system 
performance would require a judicious distribution of these resources between 
the users.  Ideally, the offloading decision and the resource allocation would 
be jointly optimized in order to maximize the overall quality of service 
provided by the system to the users.  However, this is a difficult problem due 
to the multiple tiers of offloading available, the heterogeneous resources that 
must be assigned, the asymmetrical nature of the system, and the need to 
optimize over a set of binary offloading decisions, resulting in a 
mixed-integer programming problem that is non-convex and difficult to solve.

In \cite{chen2016icassp}, a single CAP and remote cloud 
server system were considered, with multiple users and one atomic task per user 
to be offloaded. A centralized controller was designed to jointly optimize the 
offloading decisions and resource allocations to minimize a weighted sum of 
total energy consumption and the time required to process all of the 
tasks (i.e. the round time). The problem was formulated as a mixed-integer 
optimization problem, and using semidefinite relaxation (SDR) techniques from 
\cite{luo2010}, a heuristic solution was presented and shown 
through simulation to perform near optimally under a wide range of system 
parameters. 

However, in practical systems often multiple CAPs are simultaneously available 
to the users. This added dimension in the solution space adds significant 
complexity to the offloading decision and resource allocation decisions, 
especially since the computational capability of different CAPs, as well as the 
quality of wireless access to them, can substantially differ. Furthermore, the 
centralized optimization model does not account for any 
agency of the users in making these offloading decisions, who often exhibit 
selfish behaviour in reality, where each individual user  
would choose the offloading site that minimizes their own individual cost.  In 
such cases, individual users would not have an incentive to follow the 
decisions of the central controller, limiting the scope of 
\cite{chen2016icassp}. 

In this work, we consider a multi-user system with a remote cloud server and 
multiple CAPs.  
Each user has a single task per round, which may be processed 
at the user's mobile device, or offloaded it to one of the CAPs, where it may 
be processed either directly or further offloaded to the remote cloud server.   
The goal is to choose a 
set of offloading and resource allocation decisions that minimize the system 
cost or objective, which we define as a weighted sum of energy consumption and 
the round time, as in \cite{chen2016icassp} and \cite{chen2018}.  We further 
aim 
to ensure that selfish users have no incentive to deviate from their prescribed 
offloading decisions.

The contributions of our work are as follows:
\begin{itemize}
        
        \item We develop the MCAP heuristic by modelling the above 
        optimization problem as a 
        quadratically-constrained quadratic program (QCQP), extending the 
        single-CAP formulation in \cite{chen2016icassp} to multiple CAPs.  To 
        do so, we must make significant changes to the formulation in 
        \cite{chen2016icassp}, particularly through new objective variables 
        and 
        constraints to accommodate the additional offloading possibilities.  We 
        also accommodate for the addition of placement constraints, which prevent 
        user from offloading their task through some CAPs.  We then 
        relax the problem to produce a semidefinite program (SDP), and use those 
        results 
        to probabilistically produce a series of offloading decisions, following 
        the work in \cite{luo2010}.  We choose the decisions among those trials 
        that minimizes the system cost.
        
        \item We then consider the phenomenon of selfish users, where 
        individual 
        users in the system may deviate from the centralized decisions in order to 
        minimize their individual costs.  Utilizing game 
        theory, we show that the multi-CAP system with selfish users can be 
        modeled as a strategic finite game with an ordinal potential function, 
        which allows us to compute a Nash Equilibrium (NE) through the finite 
        improvement method \cite{monderer1996}.  While a 
        game-theoretic approach may suffer from an additional price of anarchy, the 
        NE provides overall system performance that is close to the optimal 
        solution.
        
        \item However, 
        due to the 
        combinatorial nature of the finite improvement method, solving for the NE 
        may 
        require substantial computational time.  To reduce the number of 
        iterations required, we propose using a starting point that is 
        closer to the optimal solution than a randomly chosen one.  We use MCAP to 
        obtain such a starting point, which is then combined with the finite 
        improvement method to construct the MCAP-NE solution. Thus, MCAP-NE 
        improves on 
        MCAP by further reducing the system cost and
        accounting for the agency of selfish users.  Simulation 
        results demonstrate the 
        superiority of the SDR approach in MCAP over a random set of offloading 
        decisions, 
        the     further improvements from the game-theoretic approach in MCAP-NE, the 
        close 
        proximity of these solutions to the optimal solution of the system, and the 
        computational improvements due to the MCAP starting point, under a 
        wide range of parameter settings.
        
\end{itemize}

Our paper is organized as follows.  Section 2 reviews related works in 
solving the offloading problem in MEC systems.
Section 3 formulates the problem and system model, culminating in a 
centralized optimization problem.  Section 4  
reformulates the system as a QCQP, which is then optimized heuristically 
through an SDR approach, to be used as the initial starting point for our NE
solution method.
Section 5 then reformulates the problem as 
a strategic form game, proves the existence of an ordinal 
potential function and NE, and presents an algorithm to find the NE.  Section 
6 presents our numerical results.  Finally, Section 7 concludes the paper.

\section{Related Works}
While there are many existing works that study the offloading problem in 
two-tier cloud systems, including \cite{zhang2013}, \cite{barbarossa2013}, 
\cite{zhang2013infocom}, \cite{chen2016icc} and \cite{munoz2015}, among 
others, fewer works have studied three tier offloading networks.  Such works 
include \cite{chen2016icassp}, 
\cite{chen2018}, \cite{chen2017}, \cite{chen2018wireless}, \cite{Rahimi2012}, 
\cite{Rahimi2013}, \cite{Song2014}, and \cite{MathProg}.  However, 
\cite{Rahimi2012}, \cite{Rahimi2013}, \cite{Song2014}, and \cite{MathProg} only 
consider the offloading decision and do not attempt to optimize the resource 
allocation.
Joint optimization over the offloading decisions and the resource allocation 
is studied in \cite{chen2016icassp} (and its extension in \cite{chen2018}) and 
\cite{chen2017} (and its extension in 
\cite{chen2018wireless}) where a multi-user, single 
CAP system with a remote cloud server is considered.  These works present a 
heuristic centralized solution using SDR for the offloading of one and multiple 
tasks per round respectively, with delay constraints also being considered in 
\cite{chen2018}.  As stated above, in this work we consider the more complex 
problem of multiple CAPs and add the assumption of
selfish users to solve the problem using both SDR and game theory.

Game-theoretic approaches to analyzing mobile offloading networks have been 
presented in \cite{MathProg}, \cite{wang2013sose}, \cite{Swede}, 
\cite{Meskar2015}, \cite{Ma}, \cite{Chen2014}, \cite{chen2015efficient}, 
\cite{Swede2019}, and \cite{ML}.  All of these works except \cite{MathProg} and 
\cite{wang2013sose} 
consider a single 
atomic task per user, which is similar to our work, while \cite{MathProg} and 
\cite{wang2013sose} stand 
apart for considering a Poisson generation of user tasks and a queueing system 
for offloading.
All of these works except \cite{Swede} study systems with potential 
functions and utilize that fact in the computation of an NE. While the 
system in
\cite{Swede} is 
not a potential game, their proof of the existence of an NE relies on the fact 
that a subgame within their system is a potential game, and they utilize that 
fact to modify the finite improvement method in order to produce an algorithm 
that could find the NE of the overall game. 

In \cite{Meskar2015} and \cite{Ma}, 
the finite improvement method from \cite{monderer1996} is directly 
adopted.   These works however do not 
consider the practical implementation of the finite improvement method, which 
is expressly considered in \cite{Chen2014} and \cite{chen2015efficient}, 
where a decentralized solution was presented.  In these works, each user 
computes their improvements locally, using a pilot signal to determine the 
the interference at the wireless channel (which entails all of the necessary 
information to the user), and a 
base station that coordinates the transmission of this signal between the 
users.  A similar method is proposed in \cite{Swede2019}, where the 
computation of the improvements are computed by the individual users, 
with a central controller coordinating these computations between the users and 
disclosing the relevant information to them.  The finite improvement 
method is further improved in \cite{Swede2019} by demonstrating theoretical 
limits to the available strategies for each user that may result in an 
improvement.  Another decentralized approach  
is considered in \cite{ML}, where each user first adopts a mixed strategy, 
then uses reinforcement learning to 
converge to a pure strategy NE.  Thus, \cite{ML} does not use the finite 
improvement method utilized by the other above works to compute the 
NE---however, \cite{ML} still relies on the 
existence of an ordinal potential function to demonstrate the convergence of 
their solution method to the NE.  

In this work, we adopt a centralized approach 
to computing the NE, 
which allows the system to compute an initial starting point to reduce the 
overall computational time.  Furthermore, we demonstrate that our system is 
strategy-proof, thus ensuring that users have no incentive to provide false 
information to the controller, which guarantees the viability of a centralized 
approach.  More importantly, none of these works considers a three-tier 
computing systems. In particular, in the offloading systems of 
\cite{Meskar2015} and \cite{Chen2014} (with a single wireless access point), or 
\cite{Swede}, 
\cite{Ma}, \cite{chen2015efficient}, 
\cite{Swede2019}, and \cite{ML} (with multiple wireless access points), the 
wireless access points serve only to forward the offloaded tasks to the cloud. 
Therefore, their offloading and resource allocation solutions are not 
applicable to our problem.

\section{System Model and Centralized Optimization}

In this section, we develop the system model in question, denoting all the 
relevant variables and modeling the cost of processing at every 
offloading site.  From this, we arrive at a mixed-integer programming problem 
to minimize the overall cost of the system.

Consider a cloud access network consisting of one remote cloud
server, $M$ CAPs denoted by the set $\mathcal{M} \in \{1,\ldots, M\}$, and $N$ 
mobile users denoted by the set $\mathcal{N} \in \{1, \ldots, N\}$, as shown in 
Fig. \ref{fig:system_model}.  
Each mobile user may have multiple tasks to be processed, and we consider a
round-by-round schedule where one task from each user is processed in each 
round.  No task in the next round may begin processing 
until all the tasks of the current round have been processed in the system.
Because of this condition, it suffices to optimize the offloading decisions and 
resources allocation for mobile users in a single round. This will be the focus 
of the remainder of the work, and without loss of generality we assume there 
are $N$ tasks in this round.

\begin{figure}
        \includegraphics[width=\columnwidth]{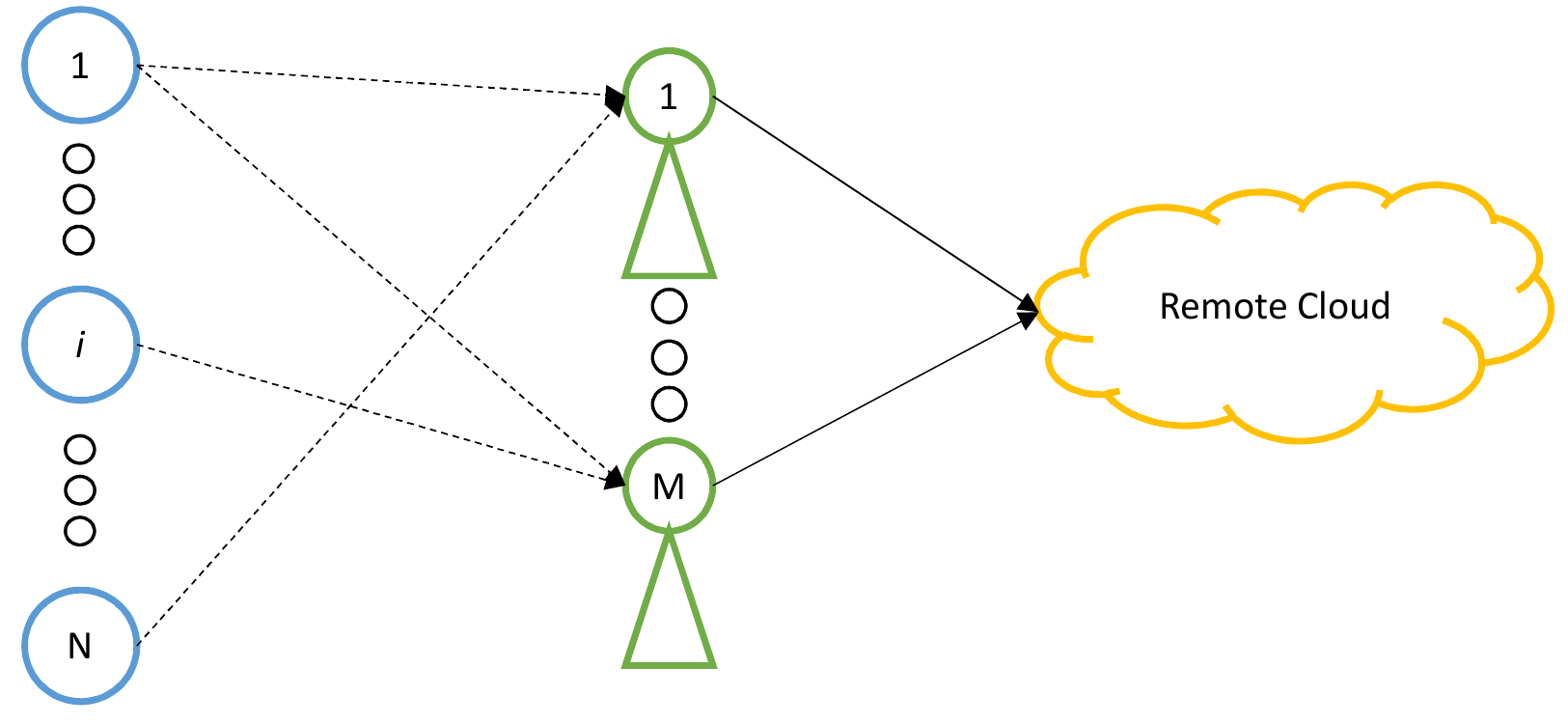}
        \caption{A mobile cloud system with CAPs.}
        \label{fig:system_model}
\end{figure}

\subsection{Offloading Decision}
In each round, a user may process their task locally, or offload 
it 
to one of the $M$ CAPs.  Each user $i$ has a possibly empty set of 
$\mathcal{K}_i \subset \mathcal{M}$ CAPs to which they may not offload their 
task to.  If a task is offloaded to a CAP, it may be processed there 
or further offloaded to the remote cloud server.  Denote the offloading 
decision of user $i$ by
\begin{align}
\mathbf{x}_i = [x_{i0}, x_{i1}, \ldots, x_{iM}] &\in \{0,1\}^{M+1}, 
\label{eq:x_int}\\
\theta_i &\in \{0,1\}, \label{eq:theta_int}
\end{align}
where $x_{i0} = 1$ if its task is locally processed locally at the user's own 
device, $x_{ij} = 1, j > 0$ if 
the task is offloaded to CAP $j$ , and $\theta_i = 1$ if the task 
that is offloaded to a CAP is processed at the cloud. We assume that the tasks 
are atomic and cannot be placed in multiple locations, only one of $x_{i0}, 
x_{i1}, \ldots, x_{iM}$ can be active --- this can be expressed through the 
following constraint:
\begin{equation}
\sum_{j=0}^{M}x_{ij} = 1, \forall i\in\mathcal{N}. \label{eq:x_sum}
\end{equation}
Additionally, processing at the cloud can only happen through transmission from 
one of the CAPs---thus $\theta_i \neq 1$ if $x_{i0} = 1$.  Alternately,
\begin{equation}
        \theta_i \leq \sum_{j=1}^{M}x_{ij}, \quad\forall i\in\mathcal{N}. 
        \label{eq:theta_sum}
\end{equation}

Finally, the placement constraints of each user are expressed as follows:
\begin{equation}
        x_{ik} = 0, \quad\forall k \in \mathcal{K}_i, i \in \mathcal{N}. 
        \label{eq:placement}
\end{equation}
\subsection{Cost of Local Processing}
The input data size, output data size, and required number of processing cycles 
of user
$i$'s task are denoted by $D_{{\text{in}}}(i)$,
$D_{{\textrm{\text{out}}}}(i)$, and $Y(i)$, respectively. For local 
computation, the 
processing time required is $T_{i}^L$, and the energy consumption for the user 
is $E_{i}^L$. 
\subsection{Cost of CAP Processing}
For tasks processed at a CAP, the bandwidth and processing rate that is 
allocated to each user must be determined in order to compute the energy and 
time required to complete a task.  Each CAP has its own wireless channel where 
users may 
offload their tasks to.  The uplink transmission time is given as
\begin{equation}
T_{ij}^t = \frac{D_{\text{in}}(i)}{\eta_{ij}^uc_{ij}^u},
\end{equation}
where $c_{ij}^u$ is the uplink 
bandwidth to CAP $j$ assigned to user $i$, and $\eta_{ij}^u$ is the spectral 
efficiency of the uplink transmission from user $i$ to CAP $j$.\footnote{The 
        spectral efficiency can be approximated by the Shannon bound 
        $\log(1+\text{SNR})$, where SNR 
        is 
        the signal-to-noise ratio between the user and the CAP.} 
        The 
downlink 
transmission time is given as
\begin{equation}
        T_{ij}^r = \frac{D_{\text{out}}(i)}{\eta_{ij}^dc_{ij}^d},
\end{equation}
where $c_{ij}^d$ and $ 
\eta_{ij}^d$ is the respective downlink bandwidth and spectral efficiency.  The 
bandwidths $c_{ij}^u$ and $c_{ij}^d$ are 
further limited by bandwidth capacities as follows:
\begin{align}
&\sum_{i=1}^{N}c_{ij}^u \leq C_j^{\text{UL}}, \quad\forall j\in\mathcal{M}, 
\label{eq:C_UL}\\
&\sum_{i=1}^{N}c_{ij}^d \leq C_j^{\text{DL}}, \quad\forall j\in\mathcal{M}, \\
&\sum_{i=1}^{N}(c_{ij}^u + c_{ij}^d) \leq C_j^{\text{Total}}, \quad\forall 
j\in\mathcal{M}.
\end{align}

Furthermore, a task is processed at the CAP with processing time $T_{ij}^a 
= Y(i)/f_{ij}^a$, where $f_{ij}^a$ is the processing rate at CAP $j$ assigned 
to user $i$, and constrained by the total processing rate of the CAP:
\begin{align}
\sum_{i=1}^{N}f_{ij}^a \leq f_j^A, \quad\forall j\in\mathcal{M}. \label{eq:f_A}
\end{align}

The total time required for processing task $i$ on CAP $j$ can then be 
expressed as
\begin{equation}
T_{ij}^A = T_{ij}^t + T_{ij}^r + T_{ij}^a.
\end{equation}

The energy consumption required for CAP processing can be expressed as
\begin{equation}
E_{ij}^A = E_{ij}^t + E_{ij}^r,
\end{equation}
where $E_{ij}^t$ and $E_{ij}^r$ represent the uplink and downlink transmission 
energy costs required by the user, respectively.

\subsection{Cost of Cloud Processing}
For cloud processing, the uplink and downlink transmission energies and times 
are identical to the CAP processing case.  However, there is now further 
transmission time between the CAP and the cloud, expressed as $T_i^{ac} = 
(D_{\text{in}}(i) + D_{\text{out}}(i))/r_{ac}$, where $r_{ac}$ is a 
predetermined 
wired 
transmission rate between the CAPs and the cloud.  The computation time is 
$T_{i}^c = Y(i)/f_i^C$, where $f_i^C$ is a predetermined processing 
rate 
assigned to each user at the cloud.  Thus, the total cloud processing time can 
be expressed as
\begin{equation}
T_{ij}^C = T_{ij}^t + T_{ij}^r + T_i^{ac} + T_{i}^c.
\end{equation} 

The energy consumption can be expressed as
\begin{equation}
E_{ij}^C = E_{ij}^t + E_{ij}^r + \beta_iC_i^C,
\end{equation}
where $C_{C_i}$ is the cloud utility cost, and $\beta_i$ is the relative weight.

\begin{table}[t]
\caption{Parameters and Respective Descriptions.} \centering

\begin{tabular}{|r| l|}
\hline 
\textbf{Parameter}& \textbf{Description} \\
\hline 
$E_{i}^L$ & local processing energy of user $i$'s task\\
$E_{ij}^t$,\ $E_{ij}^r$ & uplink transmitting energy and downlink \\ 
                & receiving energy of user $i$'s task to and \\ 
                & from CAP $j$, respectively\\
$T_{i}^L$,\ $T_{i}^{C}$ & local processing time and cloud processing \\
                &time of user $i$'s task, respectively \\
$T_{ij}^t$,\ $T_{ij}^t$ & uplink transmission time and downlink \\              
                                        &       transmission time of user $i$'s task between \\
                        & a mobile user and CAP $j$, respectively\\
$T_i^{ac}$ & transmission time of user $i$'s task between \\
         &CAP and cloud\\
$C^{\text{UL}}_j$,\ $C^{\text{DL}}_j$,\ $C^{\text{Total}}_j$ & uplink, 
downlink, and total 
transmission \\
            &capacity  associated with CAP $j$, respectively \\
$c_{ij}^u$,\ $c_{ij}^d$ & uplink and downlink bandwidth\\
         & assigned to user $i$ from CAP $j$\\
$f_{ij}^a$ & CAP processing rate assigned to user $i$ from \\ & CAP $j$\\
$\eta_{ij}^u$,\ $\eta_{ij}^d$ & uplink and downlink spectral efficiency\\
& from user $i$ to CAP $j$\\         
$C_i^C$ & system utility cost of user $i$'s task\\
$r_{ac}$ & transmission rate for each user\\ & between the CAP and cloud \\
$f_i^{C}$ & cloud processing rate for each user\\
 \hline
\end{tabular}
\label{table_notation}
\vspace{-0.5cm}
\end{table}

These various parameters are summarized in Table \ref{table_notation}.

\subsection{Optimization Problem Formulation}
To maintain adequate quality of service for each user, we seek to minimize 
both the energy consumption and the processing time.  
Because of the round-by-round schedule, each user experiences the same 
processing delay between tasks, which is the total round time.
Thus, our centralized optimization problem considers a weighted sum of the 
total energy consumption among the users and the round time, which is the 
maximum processing time among all of the tasks in the round.  Our decision 
variables are the 
offloading decisions $\mathbf{x}_i$ and $\theta_i$, as well as the resource 
allocation decisions $\mathbf{c}_i^u = [c_{i1}^u, \ldots, c_{iM}^u], 
\mathbf{c}_i^d = [c_{i1}^d, \ldots, c_{iM}^d]$, and $\mathbf{f}_i^a = 
[f_{i1}^a, \ldots, f_{iM}^a]$.  Thus, the optimization problem is as follows:

\begin{align}
\min_{\{\mathbf{x}_i\}, \{\theta_i\} \{\mathbf{c}_i^u\}, \{\mathbf{c}_i^d\}, 
        \{\mathbf{f}_i^a\}} \Bigg\{&\sum_{i=1}^{N}\alpha_i(E_{L_i} + E_{A_i} + 
E_{C_i}) \nonumber \\
&+ \max_{i'\in\mathcal{N}}\{T_{L_{i'}} + T_{A_{i'}} + T_{C_{i'}}\}\Bigg\},  
\label{eq:global_obj}\\
\text{subject to } &\text{(\ref{eq:x_int})-(\ref{eq:placement}), 
(\ref{eq:C_UL})-(\ref{eq:f_A})}, \nonumber \\
c_{ij}^u, c_{ij}^d, f_{ij}^a \geq 0, &\quad\forall i \in \mathcal{N}, j \in 
\mathcal{M}, \label{eq:geq0}
\end{align}
where
\begin{align*}
E_{L_i} &= E_i^Lx_{i0}, &T_{L_i} &= T_i^lx_{i0}, \\
E_{A_i} &= \sum_{j=1}^{M}E_{ij}^Ax_{ij}(1-\theta_i), &T_{A_i} &= 
\sum_{j=1}^{M}T_{ij}^Ax_{ij}(1-\theta_i), \\
E_{C_i} &= \sum_{j=1}^{M}E_{ij}^Cx_{ij}\theta_i, &T_{C_i} &= 
\sum_{j=1}^{M}T_{ij}^Cx_{ij}\theta_i, \\
&\quad i \in \mathcal{N},
\end{align*}

where $\alpha_i$ is a relative weight between the energy consumption and the 
round time.

\subsection{Selfish User Assumption}
Since the optimization problem (\ref{eq:global_obj}) does not address the 
actions of 
selfish users, we must also consider the individual objective function, defined 
as a weighted sum of individual energy consumption and round time, 
\begin{align}
u_i(a_1, \ldots, a_N) &= \alpha_i(E_{L_i} + E_{A_i} + E_{C_i}) \nonumber \\ 
&+ \max_{i'\in\mathcal{N}}\{T_{L_{i'}} + T_{A_{i'}} + T_{C_{i'}}\}, 
\label{eq:ind_cost}
\end{align}
where $a_i = (\mathbf{x}_i, \theta_i)$ represents the offloading decision of 
user $i$.

Thus, the users may decide against choosing the offloading decision dictated by 
some centralized solution to the problem above.  To account for this, we must 
arrive at a solution that is an NE, where no user has any 
incentive to deviate from their dictated solution given the behaviours of the 
other users.


\section{MCAP Offloading Solution}

The resultant centralized optimization problem is a non-convex mixed integer 
programming problem, and no known method exists for efficiently finding a 
global optimum.  We 
therefore propose the MCAP method for finding a heuristic solution.  In 
this method, we transform the original problem into a quadratically-constrained 
quadratic programming (QCQP) problem, then relax the integer constraints on 
the offloading variables in order to produce an SDP over 
real-valued offloading decisions variables, as explicated 
in \cite{luo2010}.  This overall QCQP-SDP solution framework is generic and was 
used also in \cite{chen2016icassp} and \cite{chen2017} for the case of a single 
CAP. However,
because of the presence of multiple CAPs, there are 
new challenges in applying these methods to the current system, due to 
the substantial addition of objective variables and constraints that result.  
In particular, we must accommodate for the additional offloading variable 
$\theta$ and associated constraints (\ref{eq:theta_int}) and 
(\ref{eq:theta_sum}), the additional dimensionality of offloading variables 
$\mathbf{x}_i$ and resource allocation variables $c_{ij}^u$, $c_{ij}^d$, and 
$f_{ij}^a$, as well as their associated constraints 
(\ref{eq:C_UL})-(\ref{eq:f_A}), and the addition of placement constraints 
(\ref{eq:placement}).
   
Through the solution of the SDR, a probabilistic 
mapping method is used to recover a set of binary-valued offloading 
decisions.  Once a set of offloading decisions have been recovered, we compute 
the optimal resource allocation problem for each one, which can 
be expressed as the following optimization problem:

\begin{align}
\min_{\{\mathbf{c}_i^u\}, \{\mathbf{c}_i^d\}, 
        \{\mathbf{f}_i^a\}} \Bigg\{&\sum_{i=1}^{N}\alpha_i(E_{L_i} + E_{A_i} + 
E_{C_i}) \nonumber \\
&+ \max_{i'\in\mathcal{N}}\{T_{L_{i'}} + T_{A_{i'}} + T_{C_{i'}}\}\Bigg\},  
\label{eq:resource_alloc} \\
&\text{s.t. (\ref{eq:C_UL})-(\ref{eq:f_A}), (\ref{eq:geq0}), } \nonumber
\end{align}
which is a convex problem, since it is a maximum of sums of positive reciprocal 
functions with linear constraints, and thus can be solved using any standard 
convex solver.  From these solutions, we choose the one that minimizes the 
system objective (\ref{eq:global_obj}).

\subsection{QCQP Formulation}
In order to reformulate (\ref{eq:global_obj}) as a QCQP, we first replace the 
integer 
constraints (\ref{eq:x_int}) and (\ref{eq:theta_int}) with the following 
quadratic constraints:
\begin{align}
        x_{ij}(x_{ij} - 1) &= 0, \quad\forall i\in\mathcal{N}, j\in 
        \{0\}\cup\mathcal{M} \label{eq:x_quad}
        \\
        \theta_i(\theta_i - 1) &= 0, \quad\forall i\in\mathcal{N}. 
        \label{eq:theta_quad}
\end{align}

We then introduce an auxiliary variable $t$ in order to remove the maximum in 
the objective and re-express that condition as a constraint.  The objective now 
becomes
\begin{align}
        \min_{\{\mathbf{x}_i\}, \{\mathbf{c}_i^u\}, \{\mathbf{c}_i^d\}, 
                \{\mathbf{f}_i^a\}, t} \Bigg\{&\sum_{i=1}^{N}\alpha_i(E_{L_i} + 
                E_{A_i}, 
                + 
        E_{C_i}) + t \Bigg\},
\end{align}
with a new delay constraint:
\begin{align}
        T_i^Lx_{i0} &+ 
        \sum_{j=1}^{M}\Bigg(\frac{D_{\text{in}}(i)}{\eta_{ij}^uc_{ij}^u} + 
        \frac{D_{\text{out}}(i)}{\eta_{ij}^dc_{ij}^d} + 
        \frac{Y(i)}{f_{ij}^a}(1-\theta_i)\Bigg)x_{ij} \nonumber \\
         &+ \big(T_i^{ac} + T_i^c\big)\theta_i \leq t, \quad\forall i \in 
         \mathcal{N}.
\end{align}
We now introduce a set of auxiliary variables $D_{ij}^u, D_{ij}^d,$ and 
$D_{ij}^a$ 
for each set of users and CAPs, and replace the above delay constraint with the 
following:
\begin{align}
        \frac{D_{\text{in}}(i)}{\eta_{ij}^uc_{ij}^u}x_{ij} &\leq D_{ij}^u, 
        \quad\forall 
        i \in 
        \mathcal{N}, j \in \mathcal{M}, \label{eq:D_u} \\
        \frac{D_{\text{out}}(i)}{\eta_{ij}^dc_{ij}^d}x_{ij} &\leq D_{ij}^d, 
        \quad\forall 
        i \in 
        \mathcal{N}, j \in \mathcal{M}, \\
        \frac{Y(i)}{f_{ij}^a}x_{ij}(1-\theta_i) &\leq D_{ij}^a, \quad\forall i \in 
        \mathcal{N}, j \in \mathcal{M}, \label{eq:D_a} \\
        T_i^Lx_{i0} &+ \sum_{j=1}^{M}(D_{ij}^u + D_{ij}^d + D_{ij}^a) \nonumber \\
        &+ (T_i^{ac} +  T_i^C)\theta_i \leq t, \quad\forall i \in \mathcal{N}. 
        \label{eq:delay}
\end{align}
With $\mathbf{D}_i^u = [D_{i0}^u,\ldots, D_{iM}^u]$, $\mathbf{D}_i^d = 
[D_{i0}^d,\ldots, D_{iM}^d]$ and $\mathbf{D}_i^a = [D_{i0}^a,\ldots, D_{iM}^a]$
we can then vectorize all of the variables and parameters into one vector 
$\mathbf{w}$, defined as
\begin{align}
        \mathbf{w} = [&\mathbf{x}_1,\ldots,\mathbf{x}_N,\theta_1,\ldots,\theta_N, 
        \mathbf{c}_1^u,\ldots,\mathbf{c}_N^u,\mathbf{D}_1^u,\ldots,\mathbf{D}_N^u, 
        \nonumber \\ 
        &\mathbf{c}_1^d,\ldots,\mathbf{c}_N^d,\mathbf{D}_1^d,\ldots,\mathbf{D}_N^d
        \mathbf{f}_1^a,\ldots,\mathbf{f}_N^a,\mathbf{D}_1^a,\ldots,\mathbf{D}_N^a,t]^T
        \nonumber \\
        &\in \mathbb{R}^{7MN+2N+1}.
\end{align}
By doing so, we can rewrite our objective function as
\begin{align}
        \mathbf{b}_0^T\mathbf{w}, \label{eq:obj_w}
\end{align}
where
\begin{align*}
        \mathbf{b}_0 = 
        [&\alpha_1E_1^l,\alpha_1E_{1,1}^A,\ldots,\alpha_1E_{1,M}^A,\ldots,\alpha_NE_{N,M}^A,
         \nonumber \\
        &\alpha_1\beta_1C_{C_1},\ldots,\alpha_1\beta_1C_{C_1}, 
        \mathbf{0}_{1\times6MN}, 1]^T.
\end{align*}

Similarly, each of the constraints above can be rewritten in matrix form.  The 
time constraint (\ref{eq:delay}) can be expressed as
\begin{align}
        (\mathbf{b}_{i}^c)^T\mathbf{w} \leq 0, \quad\forall i \in \mathcal{N}, 
        \label{eq:QCQPtime}
\end{align}
where
\begin{align*}
        \mathbf{b}_{i}^c= [&\mathbf{0}_{1\times(M+1)(i-1)+1}, T_i^L, \nonumber \\
        &\mathbf{0}_{1\times(M+1)(N-i+M-1)},\mathbf{0}_{1\times(i-1)},
        T_i^{ac}+T_i^C, \nonumber \\ 
        &\mathbf{0}_{1\times(N-i)}\mathbf{0}_{1\times(MN+(i-1)N)},
        \mathbf{1}_{1\times{}N},\mathbf{0}_{1\times(N-i)M}, \nonumber \\
        &\mathbf{0}_{1\times(N-i)}\mathbf{0}_{1\times(MN+(i-1)N)},
        \mathbf{1}_{1\times{}N},\mathbf{0}_{1\times(N-i)M}, \nonumber \\
        &\mathbf{0}_{1\times(N-i)}\mathbf{0}_{1\times(MN+(i-1)N)},
        \mathbf{1}_{1\times{}N},\mathbf{0}_{1\times(N-i)M}, -1]^T.
\end{align*}
To express constraints (\ref{eq:D_u})-(\ref{eq:D_a}), we rearrange their 
respective 
expressions to produce the following equivalent inequalities:
\begin{align*}
D_{\text{in}}(i)x_{ij} - \eta_{ij}^uc_{ij}^uD_{ij}^u &\leq 0, \\
D_{\text{out}}(i)x_{ij} - \eta_{ij}^uc_{ij}^uD_{ij}^d &\leq 0, \\
Y(i)x_{ij} - Y(i)x_{ij}\theta_i - f_{ij}^aD_{ij}^a &\leq 0 .\\
\end{align*}
Equations (\ref{eq:D_u})-(\ref{eq:D_a}) can now be expressed as
\begin{align}
        &\mathbf{w}^TA_{ij}^\mu{}\mathbf{w} + (\mathbf{b}_i^\mu)^T\mathbf{w} \leq 
        0, \mu\in\{u,d,a\}, \label{eq:mu}\\ 
        &\quad\forall i\in \mathcal{N}, j\in\mathcal{M}, \nonumber
\end{align}
where $\mathbf{e}_i \text{ is a unit vector of size $MN$ of dimension $i$},$ and
\scriptsize

\begin{align*}
&\mathbf{A}_{ij}^{u'} = -0.5\eta_{ij}^u\begin{bmatrix}
\mathbf{0}_{MN\times{}MN} & \text{diag}(\mathbf{e}_i) \\
\text{diag}(\mathbf{e}_i) & \mathbf{0}_{MN\times{}MN}
\end{bmatrix}, \\
&\mathbf{A}_{ij}^{d'} = -0.5\eta_{ij}^d\begin{bmatrix}
\mathbf{0}_{MN\times{}MN} & \text{diag}(\mathbf{e}_i) \\
\text{diag}(\mathbf{e}_i) & \mathbf{0}_{MN\times{}MN}
\end{bmatrix}, \\
&\mathbf{A}_{ij}^{a_1'} = -0.5\begin{bmatrix}
\mathbf{0}_{MN\times{}MN} & \text{diag}(\mathbf{e}_i) \\
\text{diag}(\mathbf{e}_i) & \mathbf{0}_{MN\times{}MN}
\end{bmatrix}, \\
&\mathbf{A}_{ij}^{a_2'} \in \mathbb{R}^{(M+1)N\times(M+1)N} \text{such that} \\
&\Big\{\mathbf{A}_{ij}^{a_2'}\Big\}_{\alpha,\beta} = \begin{cases}
-0.5Y(i), & \alpha = (M+1)N+j, \beta = (M+1)(i-1)+j \\ 
-0.5Y(i), & \alpha = (M+1)(i-1)+j, \beta = (M+1)N+j \\
0, & \text{otherwise}
\end{cases},\\
&\mathbf{A}_{ij}^{u} = \begin{bmatrix}
\mathbf{0}_{(M+2)N\times(M+2)N} & \mathbf{0}_{(M+2)N\times{}2MN} & 
\mathbf{0}_{(M+2)N\times{}4MN+1} \\
\mathbf{0}_{2MN\times(M+2)N} & \mathbf{A}_{ij}^{u'} & 
\mathbf{0}_{2MN\times{}4MN+1}, \\
\mathbf{0}_{4MN+1\times(M+2)N} & \mathbf{0}_{4MN+1\times{}2MN} & 
\mathbf{0}_{4MN+1\times{}4MN+1}
\end{bmatrix}, \\
&\mathbf{A}_{ij}^{d} = \begin{bmatrix}
\mathbf{0}_{(3M+2)N\times(3M+2)N} & \mathbf{0}_{(3M+2)N\times{}2MN} & 
\mathbf{0}_{(3M+2)N\times{}2MN+1} \\
\mathbf{0}_{2MN\times(3M+2)N} & \mathbf{A}_{ij}^{d'} & 
\mathbf{0}_{2MN\times{}2MN+1} \\
\mathbf{0}_{2MN+1\times(3M+2)N} & \mathbf{0}_{2MN+1\times{}2MN} & 
\mathbf{0}_{2MN+1\times{}2MN+1}
\end{bmatrix}, \\
&\mathbf{A}_{ij}^{a_1} = \begin{bmatrix}
\mathbf{0}_{(5M+2)N\times(5M+2)N} & \mathbf{0}_{(5M+2)N\times{}2MN} & 
\mathbf{0}_{(5M+2)N\times{}1} \\
\mathbf{0}_{2MN\times(5M+2)N} & \mathbf{A}_{ij}^{a_1'} & 
\mathbf{0}_{2MN\times{}1} \\
\mathbf{0}_{1\times(5M+2)N} & \mathbf{0}_{1\times{}2MN} & 0
\end{bmatrix}, \\
&\mathbf{A}_{ij}^{a_2} = \begin{bmatrix}
\mathbf{A}_{ij}^{a_2'} & \mathbf{0}_{(M+2)N\times{}6MN+1} \\
\mathbf{0}_{6MN+1\times{}(M+2)N} & \mathbf{0}_{6MN+1\times{}6MN+1}
\end{bmatrix}, \\
&\mathbf{A}_{ij}^{a} = \mathbf{A}_{ij}^{a_1} + \mathbf{A}_{ij}^{a_2}, \\
&\mathbf{b}_{ij}^u = [\mathbf{0}_{1\times{}(M+1)(i-1)+j-1}, D_{\text{in}}(i), 
\mathbf{0}_{1\times(7N-i+1)M+2N-i-j}]^T, \\
&\mathbf{b}_{ij}^d = [\mathbf{0}_{1\times{}(M+1)(i-1)+j-1}, D_{\text{out}}(i), 
\mathbf{0}_{1\times(7N-i+1)M+2N-i-j}]^T, \\
&\mathbf{b}_{ij}^a = [\mathbf{0}_{1\times{}1\times{}(M+1)(i-1)+j-1}, Y(i), 
\mathbf{0}_{1\times(7N-i+1)M+2N-i-j}]^T. \\
\end{align*}
\normalsize
The offloading constraints (\ref{eq:x_sum}) and (\ref{eq:theta_sum}) become
\begin{align}
        \mathbf{b}_i^P\mathbf{w} &= 1, \quad\forall i \in \mathcal{N}, 
        \label{eq:w_p}\\
        \mathbf{b}_i^Q\mathbf{w} &\leq 0, \quad\forall i \in \mathcal{N}, 
        \label{eq:w_q}
\end{align}
where
\begin{align*}
\mathbf{b}_i^P &= [\mathbf{0}_{1\times(i-1)(M+1)}, \mathbf{1}_{1\times{}M+1}, 
\mathbf{0}_{1\times{}7MN+2N-iM-i+1}]^T, \\
\mathbf{b}_i^Q &= [\mathbf{0}_{1\times(i-1)(M+1)+1} -\mathbf{1}_{1\times{}M}, 
\\ &\mathbf{0}_{1\times{}(N-i)(M+1)+i-1}, 1, \mathbf{0}_{6MN+1-N+i}]^T.
\end{align*}
The bandwidth and processing capacities constraints 
(\ref{eq:C_UL})-(\ref{eq:f_A}) become
\begin{align}
\mathbf{b}_j^U\mathbf{w} &\leq C_j^{\text{UL}}, \quad\forall j \in \mathcal{M}, 
\label{w_u} 
\\
\mathbf{b}_j^D\mathbf{w} &\leq C_j^{\text{DL}}, \quad\forall j \in \mathcal{M}, 
\\
\mathbf{b}_j^S\mathbf{w} &\leq C_j^{\text{Total}}, \quad\forall j \in 
\mathcal{M}, \\
\mathbf{b}_j^A\mathbf{w} &\leq f_j^A, \quad\forall j \in \mathcal{M}, 
\label{w_a}
\end{align}
where
\begin{align*}
\mathbf{b}_j' &= [\mathbf{0}_{1\times{}j-1}, 1, \mathbf{0}_{1\times{}M-j}], \\
\mathbf{b}_j^U &= [\mathbf{0}_{1\times{}(M+2)N}, 
\{\mathbf{b}_j'\}_{1\times{}N}, \mathbf{0}_{1\times{}5MN+1}], \\
\mathbf{b}_j^D &= [\mathbf{0}_{1\times{}(3M+2)N}, 
\{\mathbf{b}_j'\}_{1\times{}N}, \mathbf{0}_{1\times{}3MN+1}], \\
\mathbf{b}_j^S &= \mathbf{b}_j^U + \mathbf{b}_j^D, \\
\mathbf{b}_j^A &= [\mathbf{0}_{1\times{}(5M+2)N}, 
\{\mathbf{b}_j'\}_{1\times{}N}, \mathbf{0}_{1\times{}MN+1}]. \\
\end{align*}
The nonnegative constraint (\ref{eq:geq0}) becomes
\begin{align}
        \mathbf{w} \succeq 0, \label{succeq0}
\end{align}
while the integer constraints (\ref{eq:x_quad})-(\ref{eq:theta_quad}) can be 
written as
\begin{align}
        \mathbf{w}^T\mathbf{e}_p\mathbf{w} - \mathbf{e}_p^T\mathbf{w} = 0, 
        \quad\forall p \in 
        \{1,\ldots,(M+2)N\}, \label{w_p}
\end{align}
where $\mathbf{e}_p$ is a unit vector of size $7MN+2N+1$ of dimension $p$.

Finally, the placement constraints (\ref{eq:placement}) can be expressed 
as\footnote{Strictly 
speaking, there should be one equality constraint for each individual placement 
constraint of each user.  However, because we have a nonnegative constraint 
(\ref{succeq0}), the sum constraint provided above is sufficient.}
\begin{align}
\mathbf{b}_i^K\mathbf{w} = 0, \quad\forall i\in\mathcal{N}, 
\label{eq:QCQPplacement}
\end{align}
where
\begin{align*}
\mathbf{b}_i^K = \sum_{k\in\mathcal{K}_i} \mathbf{e}_{(M+2)(i-1)+k}. 
\end{align*}

Hence, the QCQP formulation of (\ref{eq:global_obj}) can be expressed 
equivalently as
\begin{align}
        \min_{\mathbf{w}} \mathbf{b}_0\mathbf{w}, \label{eq:QCQP} \\
        \text{s.t. (\ref{eq:QCQPtime})-(\ref{eq:QCQPplacement})}. \nonumber
\end{align}

\subsection{SDP Solution}
In order to arrive at the SDP formulation from (\ref{eq:QCQP}), we must express 
the problem in matrix form.  We begin by defining the vector $z = [\mathbf{w} 
\: 1]^T$, and reformulate (\ref{eq:QCQP}) in terms of $\mathbf{z}$.  We denote 
$\mathbf{0}$ as the zero matrix of dimension $7MN+2N+1\times7MN+2N+1$, and 
$\mathbf{e}_p$ is a unit vector of size $7MN+2N+1$ and dimension $p$.

The objective function from (\ref{eq:obj_w}) now becomes
\begin{align}
        \min_{\mathbf{z}} \mathbf{z}^T\mathbf{G}_0\mathbf{z},
\end{align}
where 
\begin{align*}
        \mathbf{G}_0 = \begin{bmatrix}
\mathbf{0} & \frac{1}{2}\mathbf{b}_0 \\
\frac{1}{2}\mathbf{b}_0^T & 0
\end{bmatrix}.
\end{align*}
Constraint (\ref{eq:QCQPtime}) is now expressed as
\begin{align}
\mathbf{z}^T\mathbf{G}_{i}^c\mathbf{z} \leq 0, \quad\forall i 
\in \mathcal{N}, \label{z_c}
\end{align}
where 
\begin{align*}
        \mathbf{G}_i^c &= \begin{bmatrix}
\mathbf{0} & \frac{1}{2}\mathbf{b}_i^c \\
\frac{1}{2}(\mathbf{b}_i^c)^T & 0
\end{bmatrix};
\end{align*}
(\ref{eq:mu}) is expressed as
\begin{align}
\mathbf{z}^T\mathbf{G}_{ij}^\mu{}\mathbf{z} \leq 0, \quad\forall \mu \in 
\{u,d,a\}, i \in 
\mathcal{N}, j \in \mathcal{M}
\end{align}
where
\begin{align*}
        \mathbf{G}_{ij}^\mu &= \begin{bmatrix}
\mathbf{A}_{ij}^\mu &  \frac{1}{2}\mathbf{b}_{ij}^\mu \\
\frac{1}{2}(\mathbf{b}_{ij}^\mu)^T & 0
\end{bmatrix}, \quad \mu \in 
\{u,d,a\};
\end{align*}
(\ref{eq:w_p}) and (\ref{eq:w_q}) are expressed as
\begin{align}
        &\mathbf{z}^T\mathbf{G}_i^P\mathbf{z} = 1, \quad\forall i \in \mathcal{N}, 
\\
&\mathbf{z}^T\mathbf{G}_i^Q\mathbf{z} \leq 0, \quad\forall i \in 
\mathcal{N},
\end{align}
where
\begin{align*}
        \mathbf{G}_i^P &= \begin{bmatrix}
\mathbf{0} & \frac{1}{2}\mathbf{b}_i^P \\
\frac{1}{2}(\mathbf{b}_i^P)^T & 0
\end{bmatrix}, \\
\mathbf{G}_i^Q &= \begin{bmatrix}
\mathbf{0} & \frac{1}{2}\mathbf{b}_i^Q \\
\frac{1}{2}(\mathbf{b}_i^Q)^T & 0
\end{bmatrix};
\end{align*}
(\ref{w_u})-(\ref{w_a}) are expressed as
\begin{align}
        &\mathbf{z}^T\mathbf{G}_j^U\mathbf{z} \leq C_j^{\text{UL}}, \quad\forall j 
\in 
\mathcal{M}, \\
&\mathbf{z}^T\mathbf{G}_j^D\mathbf{z} \leq C_j^{\text{DL}}, \quad\forall j 
\in 
\mathcal{M}, \\
&\mathbf{z}^T\mathbf{G}_j^S\mathbf{z} \leq C_j^{\text{Total}}, \quad\forall 
j 
\in 
\mathcal{M}, \\
&\mathbf{z}^T\mathbf{G}_j^A\mathbf{z} \leq f_j^A, \quad\forall j \in 
\mathcal{M}, 
\end{align}
where
\begin{align*}
        \mathbf{G}_j^\pi &= \begin{bmatrix}
\mathbf{0} & \frac{1}{2}\mathbf{b}_j^\pi \\
\frac{1}{2}(\mathbf{b}_j^\pi)^T & 0
\end{bmatrix}, \pi \in \{U,D,S,A\};
\end{align*}
(\ref{succeq0}) is expressed as
\begin{align}
\mathbf{z} \succeq 0;
\end{align}
(\ref{w_p}) is expressed as
\begin{align}
        &\mathbf{z}^T\mathbf{G}_p^I\mathbf{z} = 0, \quad\forall p \in 
\{1,\ldots,(M+2)N\},
\end{align}
where
\begin{align*}
        \mathbf{G}_p^I &= \begin{bmatrix}
\text{diag}(\mathbf{e}_p) & -\frac{1}{2}\mathbf{e}_p \\
-\frac{1}{2}(\mathbf{e}_p)^T & 0
\end{bmatrix};
\end{align*}
and finally, (\ref{eq:QCQPplacement}) is expressed as
\begin{align}
        &\mathbf{z}^T\mathbf{G}_i^K\mathbf{z} = 0, \quad\forall i \in \mathcal{N}, 
        \label{z_K} 
\end{align}
where
\begin{align*}
        \mathbf{G}_i^K &= \begin{bmatrix}
\mathbf{0} & \frac{1}{2}\mathbf{b}_i^K \\
\frac{1}{2}(\mathbf{b}_i^K)^T & 0
\end{bmatrix}.
\end{align*}

Problem (\ref{eq:QCQP}) can now be equivalently transformed to:
\begin{align}
\min_{\mathbf{z}} \mathbf{z}^T\mathbf{G}_0\mathbf{z}, \label{eq:z_obj}\\
\text{s.t. (\ref{z_c})-(\ref{z_K})}. \nonumber
\end{align}

Define $\mathbf{Z} = \mathbf{zz}^T$.  We can reformulate (\ref{eq:z_obj}) as an 
SDP under $\mathbf{Z}$, subject to an additional rank constraint: 
rank$(\mathbf{Z}) = 1$.  
Note that 
because $\mathbf{z} \succeq 0$, $\mathbf{Z}$ must be both positive semidefinite 
and 
element-wise nonnegative in order for the above problems to be equivalent. By 
dropping the rank constraint, we have the following convex problem:
\begin{align}
\min_{\mathbf{Z}} &\Tr(\mathbf{G}_0\mathbf{Z}), \label{eq:SDP} \\
\text{s.t. } &\Tr(\mathbf{G}_i^c\mathbf{Z}) \leq 0, \quad\forall i \in 
\mathcal{N}, \\
&\Tr(\mathbf{G}_{ij}^\mu{}\mathbf{Z}) \leq 0, \quad\forall \mu\in\{u,d,a\}, i 
\in 
\mathcal{N}, j \in \mathcal{M}, \\
&\Tr(\mathbf{G}_i^P\mathbf{Z}) = 1, \quad\forall i \in \mathcal{N}, 
\label{eq:SDP_p}\\
&\Tr(\mathbf{G}_i^Q\mathbf{Z}) \leq 0, \quad\forall i \in \mathcal{N}, 
\label{eq:SDP_q} \\
&\Tr(\mathbf{G}_j^U\mathbf{Z}) \leq C_j^{\text{UL}}, \quad\forall j \in 
\mathcal{M}, \\
&\Tr(\mathbf{G}_j^D\mathbf{Z}) \leq C_j^{\text{DL}}, \quad\forall j \in 
\mathcal{M}, \\
&\Tr(\mathbf{G}_j^S\mathbf{Z}) \leq C_j^{\text{Total}}, \quad\forall j \in 
\mathcal{M}, \\
&\Tr(\mathbf{G}_j^A\mathbf{Z}) \leq f_j^A, \quad\forall j \in \mathcal{M}, \\
&\Tr(\mathbf{G}_p^I\mathbf{Z}) = 0, \quad\forall p \in \{1,\ldots,(M+2)N\}, \\
&\Tr(\mathbf{G}_i^K\mathbf{Z}) = 0, \quad\forall i \in \mathcal{N}, 
\label{eq:SDP_k} \\
&\mathbf{Z} \succeq 0 \text{ (elementwise)}, \label{eq:SDPgeq0} \\
&\mathbf{Z} \in \mathbb{S}_{7MN+2N+2}^+.
\end{align}

This is a standard form SDP problem, and thus can be solved in polynomial time 
using an SDP solver.  Denote the resultant solution as $\mathbf{Z^*}$.  From 
this, we must recover a set of binary offloading decisions as a heuristic 
solution to the original problem (\ref{eq:global_obj}).  There are multiple 
means of recovering a rank-1 solution from a relaxed 
SDP, as detailed in \cite{luo2010}.  One such 
method involves generating a set of vectors from a zero-mean, $\mathbf{Z^*}$ 
covariance Gaussian distribution and mapping each element to the set of 
possible decisions $\{0,1\}$.  Such a method however would not guarantee that 
that constraints (\ref{eq:x_sum}) and (\ref{eq:theta_sum}) would be satisfied.  
Given the parameters of our 
particular problem, we instead use a randomization method to recover our 
solution, adopted in \cite{chen2016icassp}.

Note that the elements $\mathbf{Z}(7MN+2N+2,1)$ to 
$\mathbf{Z}(7MN+2N+2,(M+1)N)$ correspond to the CAP decision vectors 
$\mathbf{x}_1$ to $\mathbf{x}_N$, while $\mathbf{Z}(7MN+2N+2,(M+1)N+1)$ to 
$\mathbf{Z}(7MN+2N+2,(M+2)N)$ correspond to the cloud decision variables 
$\theta_1$ to $\theta_N$.  This arises from the fact that $\mathbf{Z} = 
\mathbf{zz}^T$ and $\mathbf{z} = [\mathbf{w} \: 1]$, and therefore the last row 
of $Z$ correspond to the elements in $\mathbf{w}$, the first $(M+2)N$ of which 
are equal to $[\mathbf{x}_1,\ldots,\mathbf{x}_N,\theta_1,\ldots,\theta_N]$. In 
order to use these results in our randomization method, we must prove the 
following lemma. Even though its is similar in form to Lemma 1 in 
\cite{chen2016icassp}, for the case of multiple CAPs we must reformulate the 
lemma 
for the dimensions of our $\mathbf{Z}$ matrix, as well as account for the 
additional offloading decision variable $\theta$ and associated constraint 
(\ref{eq:theta_sum}).
\begin{Lem}
        For the optimal solution $\mathbf{Z^*}$ of (\ref{eq:SDP}), 
        $\mathbf{Z^*}(7MN+2N+2,p) 
        \in [0,1], \forall p \in \{1,\ldots,(M+2)N\}$
\end{Lem}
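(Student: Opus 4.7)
The plan is to read off the bounds directly from the SDP constraints once we identify which entries of $\mathbf{Z}^*$ correspond to the offloading variables. Recall that the index $p \in \{1,\ldots,(M+2)N\}$ corresponds to the first $(M+2)N$ entries of $\mathbf{w}$, namely $x_{ij}$ for $i\in\mathcal{N}$, $j\in\{0\}\cup\mathcal{M}$ followed by $\theta_i$ for $i\in\mathcal{N}$. Because $\mathbf{Z} = \mathbf{z}\mathbf{z}^T$ with $\mathbf{z}=[\mathbf{w}\ 1]^T$, the entry $\mathbf{Z}(7MN+2N+2, p)$ is the relaxed counterpart of $w_p$, which is exactly the offloading variable in position $p$. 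The bound $\mathbf{Z}^*(7MN+2N+2,p) \ge 0$ is immediate from the elementwise nonnegativity constraint (\ref{eq:SDPgeq0}), so the work lies in establishing the upper bound $\mathbf{Z}^*(7MN+2N+2,p) \le 1$.

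First I would treat the case in which $p$ indexes an $x_{ij}$ entry. Here I would expand the equality constraint (\ref{eq:SDP_p}). Writing out $\mathbf{G}_i^P$ and using the structure $\mathbf{G}_i^P = \tfrac{1}{2}\begin{bmatrix}\mathbf{0} & \mathbf{b}_i^P \\ (\mathbf{b}_i^P)^T & 0\end{bmatrix}$, the trace simplifies to $\sum_{j=0}^M \mathbf{Z}^*(7MN+2N+2,\,\mathrm{pos}(x_{ij})) = 1$, which is precisely the SDP relaxation of $\sum_{j=0}^M x_{ij}=1$. Since each summand is nonnegative by (\ref{eq:SDPgeq0}), every individual term is at most $1$, giving the required bound for all positions corresponding to $\mathbf{x}_i$.

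Next I would handle the $\theta_i$ entries using the inequality constraint (\ref{eq:SDP_q}). An analogous trace expansion shows that $\Tr(\mathbf{G}_i^Q\mathbf{Z}^*)\le 0$ is equivalent to
\begin{equation*}
\mathbf{Z}^*(7MN+2N+2,\,\mathrm{pos}(\theta_i)) \;\le\; \sum_{j=1}^M \mathbf{Z}^*(7MN+2N+2,\,\mathrm{pos}(x_{ij})).
\end{equation*}
The right-hand side is bounded above by $1$ by the conclusion of the previous step (it is a partial sum of a collection of nonnegative numbers summing to $1$), and so the $\theta_i$ entries satisfy the same bound.

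I do not expect a serious obstacle; the only step that requires care is the bookkeeping to verify that expanding $\Tr(\mathbf{G}_i^P \mathbf{Z}^*)$ and $\Tr(\mathbf{G}_i^Q \mathbf{Z}^*)$ really does pick out the last-row entries at positions $\mathrm{pos}(x_{ij})$ and $\mathrm{pos}(\theta_i)$ rather than any diagonal or off-diagonal block of $\mathbf{Z}^*$. This follows because $\mathbf{b}_i^P$ and $\mathbf{b}_i^Q$ are supported only in the first $(M+2)N$ coordinates (those of $\mathbf{x}_i$ and $\theta_i$ within $\mathbf{w}$), and the $\mathbf{G}$ matrices have the arrowhead form that couples $\mathbf{w}$ only to the last coordinate of $\mathbf{z}$. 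Once this indexing is verified, the lemma follows without needing the PSD or rank structure of $\mathbf{Z}^*$ beyond elementwise nonnegativity.
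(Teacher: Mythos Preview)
Your proposal is correct and follows essentially the same approach as the paper: nonnegativity from (\ref{eq:SDPgeq0}), the upper bound for the $x_{ij}$ entries from the sum-to-one constraint (\ref{eq:SDP_p}), and the upper bound for the $\theta_i$ entries from (\ref{eq:SDP_q}) combined with the previous step. Your additional remark about the arrowhead structure of $\mathbf{G}_i^P$ and $\mathbf{G}_i^Q$ picking out last-row entries is a useful clarification that the paper leaves implicit.
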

\begin{proof}
From (\ref{eq:SDPgeq0}), we are guaranteed that all elements of $\mathbf{Z^*}$ 
are 
nonnegative.  For $p \in \{1,\ldots,(M+1)N\}$, constraint (\ref{eq:SDP_p}) 
requires that
\begin{align*}
\sum_{p=(M+1)(i-1)+1}^{(M+1)i}&\mathbf{Z^*}(7MN+2N+2,p) = 1\\
 &\quad\forall i \in \{1,\ldots,N\},
\end{align*}
which ensures that each of the above elements individually are less than or 
equal to 
1.  For $p \in \{(M+1)N+1,\ldots,(M+2)N\}$, (\ref{eq:SDP_q}) requires that
\begin{align*}
\mathbf{Z^*}&(7MN+2N+2,(M+1)N+i) \leq \\ 
\sum_{p=(M+1)(i-1)+2}^{(M+1)i}&\mathbf{Z^*}(7MN+2N+2,p), \\
&\quad \forall i \in \{1,\ldots,N\}.
\end{align*}
This ensures that the above sum must be less than or equal to 1, which ensures 
that $\mathbf{Z^*}(7MN+2N+2,p) \leq 1$ for that range of $p$.
\end{proof} 

Using Lemma 1, we interpret each element of $\mathbf{Z}(7MN+2N+2,p), p \in 
\{1,\ldots,(M+2)N\} $ as the marginal probability of offloading.  Furthermore, 
we note that 
the combination of (\ref{eq:SDP_k}) and (\ref{eq:SDPgeq0}) guarantees that all 
placement 
constraints in the recovered probabilities are also met---thus the assigned 
probability of a user offloading to a forbidden CAP is $0$. We can recover 
these probabilities for each user and denote them through the following vectors:
\begin{align*}
\mathbf{p}_i^x &= [p_{i0} \ldots p_{iM}], \quad\forall i \in \mathcal{N},\\
&= [\mathbf{Z}(7MN+2N+2,(M+1)(i-1)+1), \\
&\ldots, \mathbf{Z}(7MN+2N+2,(M+1)i)],  \\
\mathbf{P}_i^x &= [P_{i0} \ldots P_{iM}], \\
\text{s.t. } P_{ij} &= \frac{p_{ij}}{\sum_{k=0}^{M}p_{ik}}, \quad j \in 
\{0\}\cup\mathcal{M}, \\
P_i^\theta &= \frac{\mathbf{Z}(7MN+2N+2,(M+1)N+i)}{1-P_{i0}},
\end{align*}
where $\mathbf{p}_i^x$ is the original vector of recovered probabilities, 
$\mathbf{P}_i^x$ is the normalized $\mathbf{p}_i^x$ (done in case of any 
numerical imprecisions such that the total sum may not exactly equal 1), and 
$P_i^\theta$ is the conditional probability of cloud offloading given 
that there is transmission to a CAP.
The adjustment to $P_i^\theta$ is due to the fact that offloading to the cloud 
is impossible when local processing occurs.

The probabilistic mapping $\mathbf{P}_i^x$ is then used to produce the random 
vector $\mathbf{u}_i \in \{0,1\}^{M+1}$, which denotes a offloading 
decision:
\begin{align}
\mathbf{u}_i &= \mathbf{e}_j \text{ with probability } P_{ij}, \quad\forall 
j\in\{0\}\cup\mathcal{M,} \\
\Theta_i &= \begin{cases}
0, & \text{ with probability } 1 \text{ if } P_{i0} = 1, 1-P_i^\theta \text{ otherwise} \\
1, & \text{ with probability } 0 \text{ if } P_{i0} = 1, P_i^\theta \text{ otherwise}
\end{cases},
\end{align}
where $\mathbf{e}_j$ is the unit vector of size $M+1$ with dimension 
$j$.

Using this probability distribution, we generate $K$ i.i.d. trial 
solutions $\mathbf{x}_i^{(m)}$ and $\theta_i^{(m)}$ from the random vectors 
$\mathbf{u}_i$ and $\Theta_i$.  We solve problem (\ref{eq:resource_alloc}) in 
each one 
to obtain a set of 
offloading decisions the optimal set of offloading decisions from the trials.  
From this, 
we 
obtain a set of offloading decisions $\{\mathbf{x}_i^*\}$ and $\{\theta_i^*\}$ 
with a corresponding set of resource allocation decisions 
$\{\mathbf{c}_i^{u^*}\}, \{\mathbf{c}_i^{d^*}\}$, and $\{\mathbf{f}_i^{a^*}\}$, 
from which we select the one that gives the minimal objective
(\ref{eq:resource_alloc}).

The process stated above is outlined in Algorithm 1.  From observation, we have found that about $K = 10$ random trials are sufficient to produce near optimal system performance.

\begin{algorithm}[!t]
        \caption{MCAP Offloading Algorithm}
        \begin{algorithmic}[1]
                \State Solve SDP problem (\ref{eq:SDP}) to obtain matrix $\mathbf{Z^*}$.
                \State Recover $\mathbf{Z^*}$, and compute $\mathbf{P}_i^x$ 
                and $P_i^\theta$
                \For{$m=1$ to $M$}
                \State Generate $\mathbf{x}_i^{(m)}$ and $\theta_i^{(m)}$ from 
                $\mathbf{P}_i^x$ and $P_i^\theta$
                \State Solve (\ref{eq:resource_alloc}) and record the system cost 
                $J^{(m)}$ and resource 
                allocation decisions $\mathbf{c}_i^{u^{(m)}}, \mathbf{c}_i^{d^{(m)}}$, 
                and 
                $\mathbf{f}_i^{a^{(m)}}$
                \EndFor
                \State Find trial that produces the lowest cost: $m^*$ = $\arg\min J^{(m)}$
                \State Output: $\{\mathbf{x}_i^{*}\} =\{ \mathbf{x}_i^{(m^*)}\}$, 
                $\{\theta_i^{*}\} =\{\theta_i^{(m^*)}\}$ $\{\mathbf{c}_i^{u^{*}} \}= 
                \{\mathbf{c}_i^{u^{(m*)}}\}, \{\mathbf{c}_i^{d^{(m*)}}\} = 
                \{\mathbf{c}_i^{d^{*}}\}$, and $\{\mathbf{f}_i^{a^{*}}\} = 
                \{\mathbf{f}_i^{a^{(m*)}}\}$
        \end{algorithmic}
        \label{shareCAP}
\end{algorithm}

\section{Multi-user Mobile Cloud Offloading Game}
In this section, we model the interaction between mobile users and
the CAP as a mobile cloud offloading game, allowing selfish users 
agency 
over their offloading decisions.  We show that this game has an ordinal 
potential function, which implies that an NE exists and can be feasibly 
obtained.  We further demonstrate that our game is strategy-proof, thus giving 
users no incentive to provide false information to the system controller, 
allowing centralized computation of the NE.  We then propose the MCAP-NE 
algorithm, using the results 
from Section 4 and Algorithm 1 as an initial starting point in the computation 
of a NE, intuiting that a better starting point will yield an 
improvement in the number of iterations required to find the NE.  

\subsection{Game Formulation}
Consider a strategic form game

\begin{equation} \label{eq_Game_def_AF}
G_{\mathrm{MCO}}=(\mathcal{N},(\mathcal{A}_i)_{i\in \mathcal{N}},(u_i)_{i\in
\mathcal{N}}),
\end{equation}
where $\mathcal{N}=\{1,...,N\}$ is the player set containing all
mobile users, $\mathcal{A}_i = \{a_i\}$ is the strategy set for user $i$, and 
$u_i$ is the corresponding
cost function that user $i$ aims to minimize. Here, $u_i$ is a
function of the \textit{strategy profile}
$\mathbf{a}=(a_i,a_{-i})$, where $a_i\in\mathcal{A}_i,\
a_{-i}=(a_1,...,a_{i-1},...,a_{i+1},...,a_N)\in\mathcal{A}_{-i}=\prod_{j\neq
i}\mathcal{A}_j$.
Recall the strategy set $\{a_i\} = \{(\mathbf{x_i}, \theta_i)\}$ and individual 
cost functions (\ref{eq:ind_cost}). By choosing their offloading site $a_i$, 
user $i$ can decide where to
process their task to minimize their cost function $u_i$.

After receiving an offloading decision $a_i$ from all of the users, the CAP 
will assign communication and computation
resources to each user to minimize the overall system cost by
solving the convex resource allocation problem (\ref{eq:resource_alloc}).

\subsection{Structure Properties}
Given the selfish user assumption, we need to find an offloading 
decision that is stable, ensuring that users have no incentive to deviate from 
such a decision.  To 
this end, we consider the Nash Equilibrium \cite{osborne1994}:
\begin{Def}\label{def.NE} \it
The strategy profile $\mathbf{a}^*$ is a Nash equilibrium if
$u_i(a_i^*,a_{-i}^*)\leq u_i(a_i,a_{-i}^*)$, for any $a_i\in
\mathcal{A}_i,\ \forall i \in \mathcal{N}$.
\end{Def}
Definition \ref{def.NE} implies that, by employing strategies
corresponding to the NE, no player can decrease
their cost by unilaterally changing their own strategy. However, the NE
may not always exist, especially when the
game is not carefully formulated.  To ensure that our offloading game 
$G_{\text{MCO}}$ does have a NE, we demonstrate that it is an 
ordinal potential game \cite{monderer1996}:

\begin{Def}\label{def.OPG} \it
A strategic form game $G$ is an ordinal potential game (OPG) if
there exists an ordinal potential function $\phi: \prod_{
i}\mathcal{A}_i\to \mathbb{R}$ such that
\begin{align}\label{def.potential_function}
\mathrm{sgn}(u_i(a_i,a_{-i})&-
u_i(a_i',a_{-i}))\nonumber\\
&=\mathrm{sgn}(\phi(a_i,a_{-i})- \phi(a_i',a_{-i})),\quad\forall i,
\end{align}
where $a_i,a_i'\in \mathcal{A}_i, a_{-i}\in \mathcal{A}_{-i}.$
\end{Def}

\begin{Def}[Finite Improvement Property]\label{def.FIP} \it
A path in $G$ is a sequence $(\mathbf{a}[0],\mathbf{a}[1],...)$
where for every $k\geq 1$ there exists a unique player $i$ such that
$a_i[k]\neq a_i[k-1]\in \mathcal{A}_i$ while $a_{-i}[k]=
a_{-i}[k-1]$. It is an improvement
path if, for all $k\geq 1$,
$u_i(\mathbf{a}[k])<u_i(\mathbf{a}[k-1])$, where player $i$ is the
unique deviator at step $k$. $G$ has the finite improvement property
(FIP) if every improvement path in $G$ is finite.
\end{Def}

It is easy to see the following relationship between an OPG and the FIP and NE 
[12]:

\begin{Lem}\label{thm.OPG} \it
Every OPG with finite strategy sets possesses at least one
pure-strategy NE and has the FIP.
\end{Lem}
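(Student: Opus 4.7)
The plan is to prove the two assertions in sequence, exploiting the defining sign-matching property of the ordinal potential $\phi$. First I would establish the FIP, since the existence of a pure-strategy NE then drops out essentially for free by following an improvement path to its termination. The overall approach only uses (i) Definition \ref{def.OPG}, (ii) finiteness of $\prod_i \mathcal{A}_i$, and (iii) the definition of an improvement path in Definition \ref{def.FIP}; no continuity or convexity is needed.

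For the FIP, I would take an arbitrary improvement path $(\mathbf{a}[0], \mathbf{a}[1], \ldots)$ and show that $\phi(\mathbf{a}[k])$ is strictly decreasing in $k$. Fix any $k \geq 1$ and let $i$ be the unique deviator, so that $a_{-i}[k]=a_{-i}[k-1]$ and $u_i(\mathbf{a}[k]) < u_i(\mathbf{a}[k-1])$. Writing $a_i = a_i[k-1]$ and $a_i' = a_i[k]$, the inequality says $\mathrm{sgn}(u_i(a_i',a_{-i}[k-1]) - u_i(a_i,a_{-i}[k-1])) = -1$. Applying the OPG property (\ref{def.potential_function}), the same sign is shared by $\phi(a_i',a_{-i}[k-1]) - \phi(a_i,a_{-i}[k-1])$, i.e.\ $\phi(\mathbf{a}[k]) < \phi(\mathbf{a}[k-1])$. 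Hence $\{\phi(\mathbf{a}[k])\}$ is a strictly decreasing sequence of real numbers drawn from the finite set $\phi(\prod_i \mathcal{A}_i)$, so the path must terminate after finitely many steps.

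For the existence of a pure-strategy NE, I would give two equivalent arguments and pick whichever reads best. The cleanest is: since $\prod_i \mathcal{A}_i$ is finite, $\phi$ attains a minimum at some profile $\mathbf{a}^*$. If $\mathbf{a}^*$ were \emph{not} an NE, some player $i$ would have a unilateral deviation $a_i' \in \mathcal{A}_i$ with $u_i(a_i',a_{-i}^*) < u_i(a_i^*,a_{-i}^*)$, which by (\ref{def.potential_function}) forces $\phi(a_i',a_{-i}^*) < \phi(a_i^*,a_{-i}^*)$, contradicting minimality. Alternatively, starting from any $\mathbf{a}[0]$, inductively let any player with a profitable unilateral deviation perform it; by the FIP just proved, this process halts, and the terminal profile is by definition a pure NE.

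The main obstacle, if any, is purely notational --- making sure that the sign convention in (\ref{def.potential_function}) is deployed consistently with the convention that players \emph{minimize} $u_i$ (rather than maximize utility, as in the classical Monderer--Shapley statement). Beyond that, everything is a direct consequence of the OPG definition together with finiteness, so no substantial new machinery is required.
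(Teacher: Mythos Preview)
Your proof is correct and is the standard argument for this classical fact. Note, however, that the paper does not actually supply a proof of this lemma: it is stated as a known result and attributed to \cite{monderer1996}, with the remark that the relationship ``is easy to see.'' So there is no paper proof to compare against; your write-up simply fills in what the authors left to the reference. Both of your arguments for the NE existence (minimizer of $\phi$, or terminal point of an improvement path) are valid and equivalent in this finite setting, and your handling of the sign convention for cost minimization is fine.
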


We now show that $G_{\mathrm{MCO}}$ is indeed an OPG.

\begin{Prop}\label{prop.G_MCO} \it
The proposed mobile cloud offloading game $G_{\mathrm{MCO}}$ is an
OPG and,
therefore, it always has an NE and the FIP.
\end{Prop}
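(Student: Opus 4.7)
The plan is to exhibit an ordinal (in fact, exact) potential function for $G_{\mathrm{MCO}}$ so that Lemma \ref{thm.OPG} immediately yields both the existence of a pure-strategy NE and the FIP. The natural candidate is the global system objective itself, namely
\begin{align*}
\phi(\mathbf{a}) \defeq \sum_{i=1}^{N}\alpha_i(E_{L_i} + E_{A_i} + E_{C_i}) + \max_{i'\in\mathcal{N}}\{T_{L_{i'}} + T_{A_{i'}} + T_{C_{i'}}\},
\end{align*}
evaluated under the resource allocation returned by (\ref{eq:resource_alloc}) for the given offloading profile $\mathbf{a}$. Comparing with the per-user cost (\ref{eq:ind_cost}), the difference is
\begin{align*}
\phi(\mathbf{a}) - u_i(\mathbf{a}) = \sum_{j\neq i}\alpha_j(E_{L_j} + E_{A_j} + E_{C_j}).
\end{align*}

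The key step is to argue that this difference is invariant when user $i$ alone deviates from $a_i$ to $a_i'$ while $a_{-i}$ is fixed. For any $j\neq i$, the quantity $\alpha_j(E_{L_j} + E_{A_j} + E_{C_j})$ depends solely on $a_j$ and on the parameters $E_j^L,\ E_{jk}^t,\ E_{jk}^r,\ C_j^C$ of Table \ref{table_notation}; per the model in Section III these energy parameters are fixed inputs that do not depend on the bandwidth/processing-rate variables $c^u, c^d, f^a$, nor on $a_i$. Consequently each summand for $j\neq i$ is unchanged by user $i$'s deviation, and therefore $\phi(a_i,a_{-i}) - \phi(a_i',a_{-i}) = u_i(a_i,a_{-i}) - u_i(a_i',a_{-i})$. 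This is in fact an exact potential identity, which trivially implies the sign condition (\ref{def.potential_function}) required by Definition \ref{def.OPG}.

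Since the strategy set $\mathcal{A}_i = \{(\mathbf{x}_i,\theta_i)\in\{0,1\}^{M+1}\times\{0,1\} : \text{(\ref{eq:x_sum}),(\ref{eq:theta_sum}),(\ref{eq:placement}) hold}\}$ is finite, Lemma \ref{thm.OPG} applies to $G_{\mathrm{MCO}}$ and delivers both a pure-strategy NE and the FIP, completing the proof.

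I expect the main subtlety to lie in handling the round-time term under deviation: when $a_i$ changes, the CAP re-solves (\ref{eq:resource_alloc}), so the resource allocation and hence the processing times of \emph{all} users may shift, causing $\max_{i'}\{T_{L_{i'}} + T_{A_{i'}} + T_{C_{i'}}\}$ to move in a globally coupled manner. The point is that this max term enters $\phi$ and $u_i$ in exactly the same way, so any change it undergoes cancels out of $\phi - u_i$; the potential identity therefore hinges not on the max, but on the decoupling of non-deviating users' energy costs from the resource allocation, which is guaranteed by how the energy parameters are modeled in Section III.
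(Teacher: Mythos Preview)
Your proof is correct and follows essentially the same approach as the paper: both take the global system objective as the potential function and verify the exact-potential identity $\phi(a_i,a_{-i})-\phi(a_i',a_{-i})=u_i(a_i,a_{-i})-u_i(a_i',a_{-i})$ by observing that the shared $\max$ term appears identically in $\phi$ and $u_i$, while the non-deviating users' energy terms depend only on $a_{-i}$ and the fixed parameters of Table~\ref{table_notation}. Your presentation is slightly more explicit in justifying why $E_{L_j}+E_{A_j}+E_{C_j}$ for $j\neq i$ is unaffected by the re-solved resource allocation, but the substance is the same.
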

\begin{proof}
We first construct the potential function
\begin{align}\label{potential_function}
 \phi(\mathbf{a})=&\bigg[\sum_{i=1}^N\alpha_i(E_{l_{i}}x_{l_{i}}+E_{A_i}x_{a_{i}}+E_{C_i}x_{c_{i}})\nonumber\\
&+\max_{i}\{T_{L_{i}}+T_{A_{i}}+T_{C_{i}}\}\bigg],
\end{align}
and note that our potential function is equal to the system objective.

Define $\mathbf{E}_i\triangleq
\alpha_i(E_{l_{i}}+E_{A_i}+E_{C_i})$ and
$\mathbf{T}_i\triangleq T_{L_{i}}+T_{A_{i}}+T_{C_{i}}$. Given two
different strategy profiles $\mathbf{a}=(a_i,a_{-i})$ and
$\mathbf{a}'=(a_i',a_{-i})$, where only user $i$ chooses different
strategies $a_i$ and $a_i'$, respectively, the difference between
$\phi(\mathbf{a})$ and $\phi(\mathbf{a}')$ is
\begin{align}
 \phi(\mathbf{a})-\phi(\mathbf{a}')&=\ \mathbf{E}_i+\sum_{j\neq i}\mathbf{E}_j+\max\{\mathbf{T}_i,\max_{j\neq i}\{\mathbf{T}_j\}\}\nonumber\\
 &\ \ \ \  -\mathbf{E}_i'-\sum_{j\neq 
 i}\mathbf{E}_j-\max\{\mathbf{T}_i',\max_{j\neq i}\{\mathbf{T}_j\}\},\nonumber\\
 &=\ \mathbf{E}_i+\max\{\mathbf{T}_i,\max_{j\neq i}\{\mathbf{T}_j\}\}\nonumber\\
 &\ \ \ \ -\mathbf{E}_i'-\max\{\mathbf{T}_i',\max_{j\neq 
 i}\{\mathbf{T}_j\}\},\nonumber\\
 &=\ u_i(\mathbf{a})-u_i(\mathbf{a}').\nonumber
\end{align}
Since $\phi(\mathbf{a})$ in \eqref{potential_function} satisfies the
condition of the potential function of an OPG defined in
\eqref{def.potential_function}, it is a potential function
of $G_{\mathrm{MCO}}$. Therefore, $G_{\mathrm{MCO}}$ is an OPG by Definition 2
\end{proof}

\subsection{Strategy-Proofness}
Note that the computation of an NE in $G_{\text{MCO}}$ requires accurate task 
information from all
users. We thus must show that there is no incentive for any
user to provide false task information (i.e. data sizes, required number of 
CPU cycles, and relative weight $\alpha_i$). 

First, we note that if a user is found by the CAP to provide false information, 
it will be
prohibited from participating in the system, so no user will both
provide false information and offload its task to a CAP in the
same round, when its deceit will be noticed by the CAP. Thus, the deceitful 
user's strategy must be $x_{i0} = 1$ in that round.  Hence, at any NE, 
the delay of 
the round will be $\max\{T_i^{l'}, T_{-i}'\}$, where $T_i^{l'}$ is the false 
local processing time for user $i$, and $T_i^{l'}$ is the maximum delay of all 
the other users tasks given the current strategy $\mathbf{a}'^{*}$.  The system 
cost or potential function at this point is
\begin{equation}
        \phi(\mathbf{a}'^{*}) = \alpha_i'E_i^{l'} + \sum_{j\neq 
        i}\alpha_jE_j' + 
        \max\{T_i^{l'}, T_{-i}'\}. \label{eq:strategy}
\end{equation}
where $E_i^{l'}$ is the local processing energy for user $i$, and $E_j'$ is 
the energy consumption for user $j$ at strategy $\mathbf{a}'^{*}$.

If user $i$ does not participate in the game however, the system cost for the 
remaining users at NE $a_{-i}^*$ is

\begin{equation}
        \phi_{-i}(a_{-i}^*) = \sum_{j\neq i}\alpha_jE_j + T_{-i},
\end{equation}
with $T_{-i}$ being the delay for that round.  Note that $\sum_{j\neq 
i}\alpha_jE_j' + T_{-i}' = \phi_{-i}(a_{-i}'^{*}) \geq 
\phi_{-i}(a_{-i}^*) = 
\sum_{j\neq i}\alpha_jE_j + T_{-i}$ where the equality holds when $a_{-i}'^{*} 
= a_{-i}^*$ (and thus $E_j' = E_j$ and $T_{-i}' = T_{-i}$.

We now show that $T_{-i} \leq \max\{T_i^{l'}, T_{-i}'\}$ always.  If $T_{-i} 
> T_{-i}' \geq T_i^{l'}$, then:
\begin{align*}
\phi(\mathbf{a}'^{*}) &= \alpha_i'E_i^{l'} + \sum_{j\neq 
        i}\alpha_jE_j' + 
\max\{T_i^{l'}, T_{-i}'\}, \\
&= \alpha_i'E_i^{l'} + \sum_{j\neq i}\alpha_jE_j' + T_{-i}', \\
&> \alpha_i'E_i^{l'} + \sum_{j\neq i}\alpha_jE_j + T_{-i}, \\
&= \alpha_i'E_i^{l'} + \sum_{j\neq i}\alpha_jE_j + \max\{T_i^{l'}, 
T_{-i}\},
\end{align*}
which is contradictory to (\ref{eq:strategy}).  If $T_{-i} > T_i^{l'} \geq 
T_{-i}'$ then
\begin{align*}
        \phi(\mathbf{a}'^{*}) &= \alpha_i'E_i^{l'} + \sum_{j\neq 
        i}\alpha_jE_j' + 
        \max\{T_i^{l'}, T_{-i}'\}, \\
        &= \alpha_i'E_i^{l'} + \sum_{j\neq i}\alpha_jE_j' + T_{i}^{l'}, \\
        &> \alpha_i'E_i^{l'} + \sum_{j\neq i}\alpha_jE_j + T_{-i}', \\
        &> \alpha_i'E_i^{l'} + \sum_{j\neq i}\alpha_jE_j + T_{-i},\\
        &= \alpha_i'E_i^{l'} + \sum_{j\neq i}\alpha_jE_j + \max\{T_i^{l'}, 
        T_{-i}\},
\end{align*}
which is also contradictory to (\ref{eq:strategy}).  Thus, by providing false 
information and 
processing its tasks locally, user $i$ lengthens the round-time and incurs a 
higher cost compared with processing its task locally without participating in 
the game.

Therefore, no user participating in the game has any incentive not 
to be untruthful to the system controller.

\subsection{MCAP-NE Offloading Algorithm}

\begin{algorithm}[!t]
  \caption{MCAP-NE Offloading Algorithm}
  \begin{algorithmic}[1]
    \State Obtain an initial strategy profile $\mathbf{a}[0]$ and corresponding 
    resource allocation decisions $\{\mathbf{c}_i^u\}, \{\mathbf{c}_i^d\}$, and 
    $\{\mathbf{f}_i^a\}$ from Algorithm 1.
    \State Set $\mathrm{NE}=\mathrm{False}$ and $k=0$.
    \While{$\mathrm{NE}==\mathrm{False}$}
    $\mathrm{flag}=0$ and $i=1$;
    \While{$flag==0$ and $i\leq N$}
    \State Calculate $\{\mathbf{c}_i^{u^{*}} \}, \{\mathbf{c}_i^{d^{*}}\}$, and 
    $\{\mathbf{f}_i^{a^{*}}\}$ for $(a_i',a_{-i}[k])$, for all $a_i'\in
    \mathcal{A}_i$;
    \If{$u_i(\mathbf{a}[k])> u_i(a_i',a_{-i}[k]),a_i'\in \mathcal{A}_i$}
    \State Set $a_i[k+1]=a_i'$,
    $a_{-i}[k+1]=a_{-i}[k]$;
    \State Set $\mathbf{a}[k+1]=(a_i[k+1],a_{-i}[k+1])$, $\mathrm{flag}=1$;
    \State $k=k+1$;
    \ElsIf{$i==N$}
    \State Set $\mathrm{flag}=1$, $\mathrm{NE}=\mathrm{True}$;
    \Else
    \State $i=i+1$;
    \EndIf
    \EndWhile
    \EndWhile
    \State Output: the NE $\mathbf{a}^*$ of $G_{\mathrm{MCO}}$ and the 
    corresponding resource allocation $\{\mathbf{c}_i^{u^*}\}, 
    \{\mathbf{c}_i^{d^*}\}$, and $\{\mathbf{f}_i^{a^*}\}$.
  \end{algorithmic}
  \label{algorithm}

\end{algorithm}

In this section we propose a mobile cloud offloading algorithm
based on FIP to find an NE of $G_{\text{MCO}}$. Since the CAP
has all of the necessary information from the mobile users
and needs to compute the communication and computational
resources to the offloading users, we propose a centralized
approach to computing the NE.

In our solution method, the CAP first initiates a starting strategy profile 
$\mathbf{a}[0]$
containing a set of hypothetical offloading decisions for all users. Based on
$\mathbf{a}[0]$, it obtains the optimal resources allocation by solving 
(\ref{eq:resource_alloc}).
Then, the CAP takes an arbitrarily ordered list of the users and
examines each user's strategy set one-by-one. Once it
finds a user $i$ who can reduce their individual cost by switching from 
strategy $a_i[0]$ to another strategy $a_i'\in \mathcal{A}_i$
(with $a_{-i}[0]$ remaining constant), it updates the strategy profile
from $\mathbf{a}[0]$ to $\mathbf{a}[1]$ where $a_i[1]=a_i'$ and
$a_{-i}[1]=a_{-i}[0]$ (subject to placement constraints). The CAP repeats the 
same procedure to find an
improvement path $(\mathbf{a}[0],\mathbf{a}[1],\mathbf{a}[2],...)$
of $G_{\mathrm{MCO}}$.  Because $G_{\text{MCO}}$ is an ordinal potential game, 
this improvement path is guaranteed to terminate at an NE.

While any initial starting point will eventually lead to a NE through an 
improvement path, the choice of initial point may have an effect on the length 
of the improvement path.  Since each step of the improvement path method may 
require a nontrivial amount of time (due to the combinatorial nature of the 
finite improvement method), reducing the number of iterations of this method 
can greatly 
improve the computational time required to compute the NE.  Thus, we propose 
using the result of the 
MCAP 
method detailed in Section 4 as our initial point.  Because the result from 
the MCAP algorithm is substantially closer to optimal than a random 
starting 
point, we expect that using the MCAP solution as our initial point 
will reduce 
the 
number of iterations required to compute the NE, which we have confirmed in 
simulation.

The details of the proposed algorithm, which we term MCAP-NE, are given in
Algorithm \ref{algorithm}.



\section{Simulation Results}

In this section, we detail the simulation results of MCAP and MCAP-NE. 
We first consider a system without any user placement constraints, 
subjecting 
the system through a series of parameter changes. We observe the system cost 
and the number of 
iterations required using the different optimization methods.  We then consider 
the addition of placement constraints in the system.

\subsection{Default Parameters}
\begin{table}[!t]
        \caption{Default Simulation Parameter Values} \centering
        
        \begin{tabular}{|r| l|}
                \hline 
                \textbf{Parameter}& \textbf{Default Value} \\
                \hline 
                Processing cycles per byte & 1900 \\
                Minimum input data size & 10 MB \\
                Maximum input data size & 30 MB \\
                Minimum output data size & 1 MB \\
                Maximum output data size & 3 MB \\
                Number of CAPS & 2 \\
                $\alpha_i$ & 0.5 s/J, $\quad\forall i$ \\
                $\beta_i$ & 1.7 $\times 10^{-7}$ \\
                $C_{\text{UL}}^j, C_{\text{DL}}^j$ & 20 MHz, $\quad\forall j$ \\
                $C_{\text{Total}}^j$ & 40 MHz, $\quad\forall j$ \\
                Minimum $\eta_{ij}^u, \eta_{ij}^d$ & 2 b/s/Hz, $\quad\forall i,j$ \\
                Maximum $\eta_{ij}^u, \eta_{ij}^d$ & 5 b/s/Hz, $\quad\forall i,j$\\
                Local CPU speed & $2.39 \times 10^9$ cycles/s \\
                CAP CPU speed & $5\times10^9$ s/bit \\
                Cloud CPU speed & $7.5\times10^9$ s/bit \\
                Tx and Rx energy & $1.42\times 10^{-7}$ J/bit \\
                CAP to cloud transmission rate & $R_{ac}=15$ Mpbs \\
                \hline
        \end{tabular}
        \label{table_parameters}
        \vspace{-0.5cm}
\end{table}
 We utilize the x264 CBR encoding application, 
which requires 1900 cycles/byte 
\cite{miettinen2010}.  The input and output data
sizes of each task are assumed to be uniformly distributed from
$10$MB to $30$MB and from $1$MB to $3$MB, respectively. We set by default the 
number of 
users $N = 10$,
number of CAPs $M = 2$, $\alpha = 0.5$ s/J, and $\beta = 1.7 \times 10^{-7}$ 
J/bit. The
bandwidths at each CAP are $C_{\text{UL}}^j = C_{\text{DL}}^j = 20$ MHz, and 
$C_{\text{Total}}^j = 40 MHz$ 
for 
each CAP, and the spectral efficiencies $\eta_{ij}^u = \eta_{ij}^d$ are 
uniformly distributed between 2 and 5 b/s/Hz, which correspond to typical WiFi 
communication settings.  We use an iPhone X mobile device 
with a CPU speed of $2.39 \times 10^9$ cycles/s, leading to a local computation 
time of 
$9.93\times10^{-8}$ s/bit \cite{iPhone}, and 
adopt a CPU rate of $5\times10^9$ cycles/s at the CAPs, and $7.5\times10^9$ 
cycles/s at the cloud. The transmission and receiving energy per bit at each
mobile device are both $1.42\times 10^{-7}$ J/bit as indicated in
Table 2 in \cite{miettinen2010}.  For offloading a task to the
cloud, the transmission rate is $R_{ac}=15$ Mpbs. Also, we set the
cloud utility cost $C_{c_{i}}$ to be the same as that of the input
data size $D_{\textrm{\text{in}}}(i)$.  These parameter values are listed in 
Table II.

\subsection{Impact of Different Parameters}
  We run the simulation through 50 rounds, with the input 
and output data size of each task being
independently and identically generated, and plot the averaged total
system cost.  We compare the performance of MCAP and MCAP-NE 
against the optimal solution (obtained through exhaustive search), a random 
mapping of offloading decisions, and the NE 
results with random mapping as the starting point.  

We consider the above system settings while varying a single parameter, in 
order to demonstrate our solution's superior performance under a variety of 
settings.
We note that in each of these figures, both MCAP and MCAP-NE incur costs close 
to that of the optimal solution, despite the great strategy 
space available, while MCAP-NE consistently improves slightly upon MCAP.  This 
shows that our solution is highly reliable 
in 
recovering 
near-optimal solutions to the multi-CAP optimization problem.  By contrast, 
\textit{random mapping} consistently performs far worse.  Figure 
\ref{fig:multiCAP} shows that the cost 
decreases with the number of available CAPs, which is expected given the 
additional resources that each CAP provides to the users.
Figure \ref{fig:multiuser}, shows that the system cost increases 
with the number of users, which is as expected given the added competition of 
resources these users produce.  Figures 
\ref{fig:multialpha} and \ref{fig:multibeta} both show that the system cost 
increases with an increase in the cost weights $\alpha$ and $\beta$, though the 
increase levels off with $\beta$.  This is because increasing $\beta$ would 
have no effect when it is already too high for the users to utilize cloud 
processing.

\begin{figure}
        \centering
        \begin{subfigure}[b]{0.5\textwidth}
                \centering
                \includegraphics[width=\textwidth]{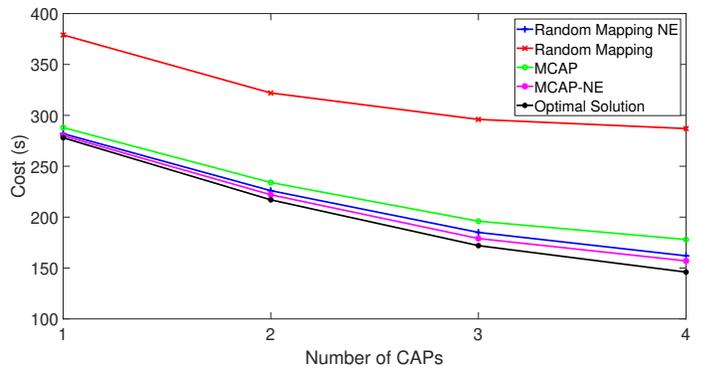}
                \caption{Cost against number of CAPs.}
                \label{fig:multiCAP}
        \end{subfigure}
        \begin{subfigure}[b]{0.5\textwidth}
                \centering
                \includegraphics[width=\textwidth]{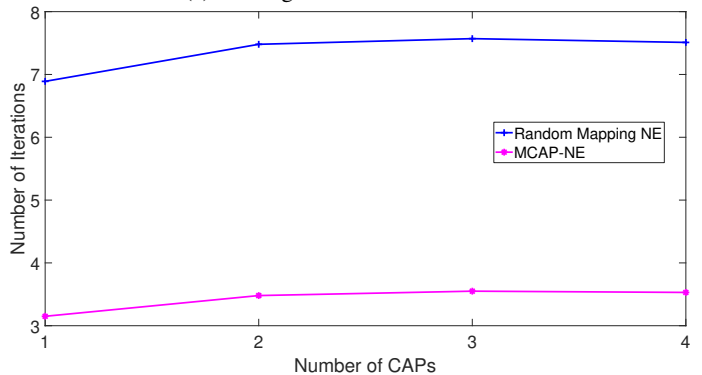}
                \caption{Number of Iterations required against CAPs.}
                \label{fig:numITCAPS}
        \end{subfigure}
        \caption{Impact of the Number of CAPs.}
\end{figure}

\begin{figure}
        \centering
        \begin{subfigure}[b]{0.5\textwidth}
                \includegraphics[width=\textwidth]{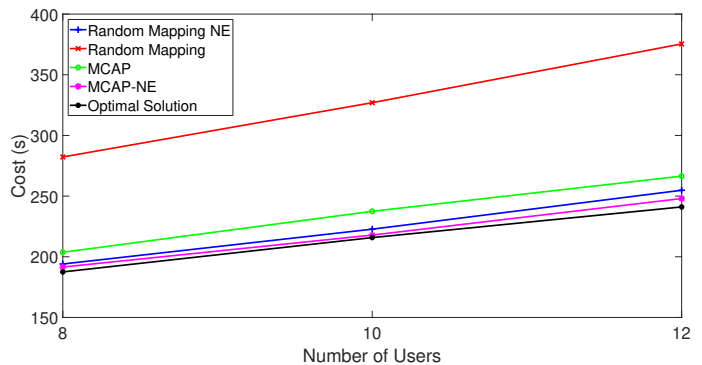}
                \caption{Impact of the number of Users.}
                \label{fig:multiuser}
        \end{subfigure}
        \hfill
        \begin{subfigure}[b]{0.5\textwidth}
                \includegraphics[width=\textwidth]{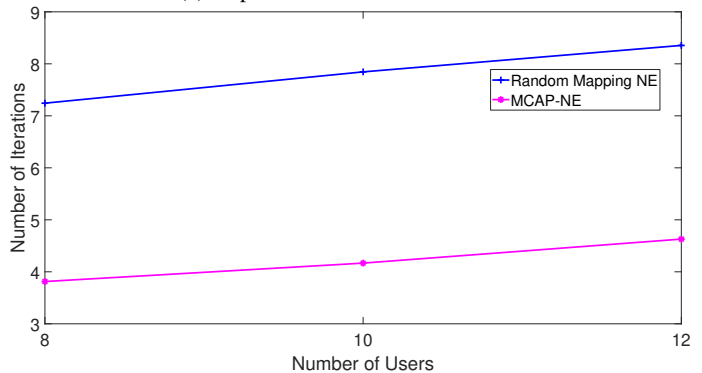}
                \caption{Number of Iterations against the number of Users.}
                \label{fig:numITUSERS}
        \end{subfigure}
        \caption{Impact of Number of Users.}
\end{figure}

\begin{figure}
        \centering
        \begin{subfigure}[b]{0.5\textwidth}
                \centering
                \includegraphics[width=\textwidth]{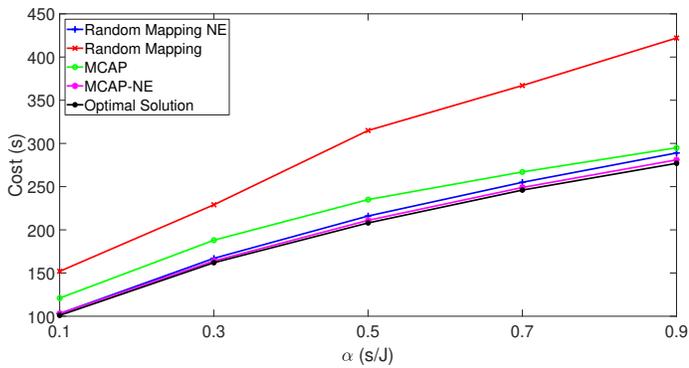}
                \caption{Cost against $\alpha$}
                \label{fig:multialpha}
        \end{subfigure}
        \begin{subfigure}[b]{0.5\textwidth}
                \centering
                \includegraphics[width=\textwidth]{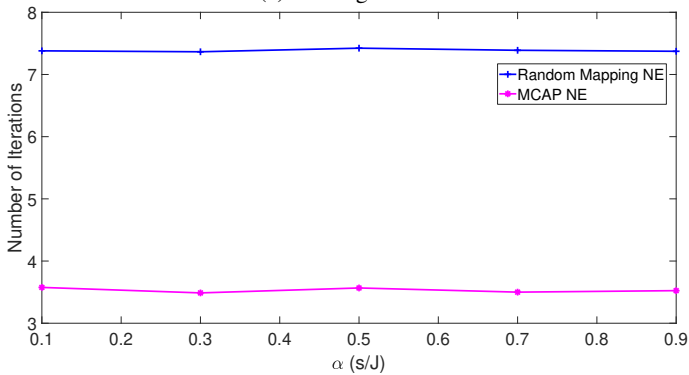}
                \caption{Number of Iterations against $\alpha$.}
                \label{fig:numITALPHA}
        \end{subfigure}
        \caption{Impact of $\alpha$.}
\end{figure}

\begin{figure}
        \centering
        \begin{subfigure}[b]{0.5\textwidth}
                \centering
                \includegraphics[width=\textwidth]{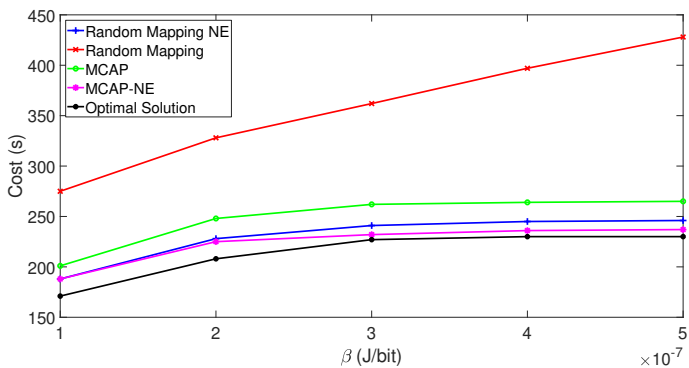}
                \caption{Cost against $\beta$}
                \label{fig:multibeta}
        \end{subfigure}
        \begin{subfigure}[b]{0.5\textwidth}
                \centering
                \includegraphics[width=\textwidth]{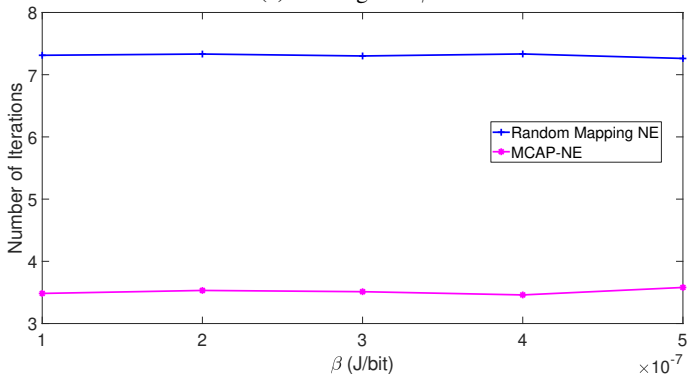}
                \caption{Number of Iterations required $\beta$.}
                \label{fig:numITBETA}
        \end{subfigure}
        \caption{Impact of $\beta$.}
\end{figure}

Figures \ref{fig:numITCAPS}, \ref{fig:numITUSERS},  \ref{fig:numITALPHA}, and 
\ref{fig:numITBETA} show the number of 
iterations required to compute the NE against the number of CAPs, the number of 
users, $\alpha$, and
$\beta$,
respectively, with either a random starting point or an MCAP starting point.  
In all of these 
figures, we see that number of iterations required to 
obtain the NE is 
more than doubled when using a random starting point as opposed to the 
MCAP-NE, confirming the performance benefit of using 
MCAP for an initial starting point.  While the addition of CAPs 
does 
not noticeably increase the number of iterations required, the number of users 
does have a discernible effect due to the additional number of possible users 
who may find an improvement.  As expected, $\alpha$ 
and $\beta$ have no discernible effect as those parameters do not affect the 
size of the strategy space.

\subsection{Placement Constraints}

We now study a system in the presence of placement constraints.  Here, we 
consider a system of 12 users by default, randomly assigning as a placement 
constraint either one of the CAPs or the empty set with equal probability.  
All of the other parameters of the system are the same as in Section 6.1.

\begin{figure}[!t]
        \centering
        \includegraphics[width=\linewidth]{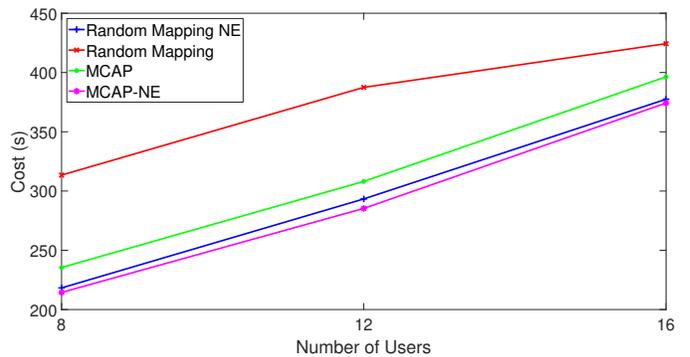}
        \caption{Cost against the number of users with placement constraints}
        \label{fig:numUsersPlacement}
\end{figure}

\begin{figure}[!t]
        \centering
        \includegraphics[width=\linewidth]{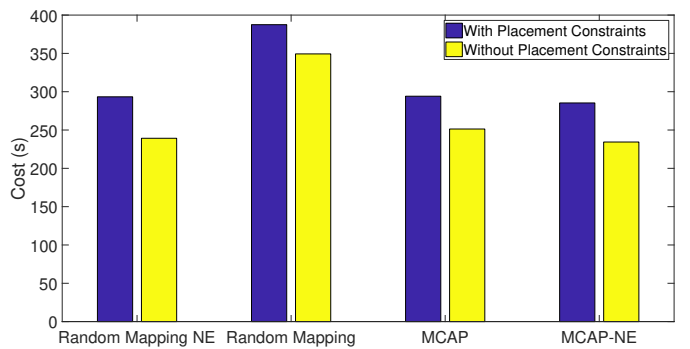}
        \caption{Cost with placement constraints and without placement constraints.}
        \label{fig:placement}
\end{figure}

Figures \ref{fig:numUsersPlacement} shows the system cost against the number of 
users under the system with placement constraints.  We note that the 
behaviour here is similar to the behaviour in the system 
without placement constraints --- the cost is increasing with respect to the 
number of users, and relative performance between offloading methods remain 
close to the original case.  Figure \ref{fig:placement} compares a system with 
placement constraints with a system without placement constraints.  We observe 
that the system with placement constraints requires a higher cost, as 
beneficial offloading decisions that would otherwise be desirable cannot be 
made.

\section{Conclusion}
A multi-user mobile cloud computing system with multiple CAPs has been
considered, in which each mobile user has a task to be processed either 
locally, at one of the CAPs, subject to the user's placement constraint, or at 
a remote cloud.  We solve the non-convex optimization problem through
two approaches: in MCAP we formulate the problem as a
QCQP and use an SDP relaxation to arrive at a heuristic solution; while in 
MCAP-NE, we accommodate selfish users by leveraging the finite improvement 
property of anordinal potential game to find an NE.  Simulation results
show near optimal performance of our methods as well as the
utility of the SDP starting point in substantially reducing the number of 
iterations required to compute the NE.

\bibliographystyle{IEEEtran}
\bibliography{mcloud}

\end{document}